\definecolor{webgreen}{rgb}{0,.5,0}
\definecolor{webbrown}{rgb}{.6,0,0}
\definecolor{grigio}{rgb}{.85,.85,.85} 
\definecolor{RoyalBlue}{rgb}{0.0, 0.14, 0.4}
\definecolor{skyblue1}{rgb}{0.45,0.62,0.81}
\definecolor{skyblue2}{rgb}{0.2,0.39,0.64}
\definecolor{skyblue3}{rgb}{0.13,0.29,0.53}
\definecolor{scarlet1}{rgb}{0.93,0.16,0.16}
\definecolor{scarlet2}{rgb}{0.8,0,0}
\definecolor{scarlet3}{rgb}{0.64,0,0}
\definecolor{g}{gray}{0.50}
\newenvironment{theorem}{\textbf{Theorem.}}{}
\newenvironment{proof}{\textit{\textbf{Proof.}}}{\hfill$\square$}
\newcommand{\rd}[1]{\overline{#1}}
\newcommand{\dt}{\mathrm d_t}
\newcommand{\dk}{{\mathbb 1}}
\newcommand\flowsfromb{\mathrel{\reflectbox{$\leadsto$}}}
\newcommand{\iii}{i}
\newcommand{\vvve}{v}
\newcommand{\wwwe}{w}
\newcommand{\vvv}{{\boldsymbol \vvve}}
\newcommand{\www}{{\boldsymbol \wwwe}}
\newcommand{\spe}{\alpha}
\newcommand{\rct}{\rho}
\newcommand{\rcts}{\rho_{\spe}}
\newcommand{\com}{\kappa}
\newcommand{\nspe}{s}
\newcommand{\nspeq}{\nspe_Q}
\newcommand{\nspep}{\nspe_P}
\newcommand{\nrct}{r}
\newcommand{\nrctin}{\nrct_{\text{in}}}
\newcommand{\nspeqex}{\nrct_{Q_\text{ex}}}
\newcommand{\nspepex}{\nrct_{P_\text{ex}}}
\newcommand{\ncom}{{c}}
\newcommand{\setspe}{\mathcal S}
\newcommand{\setrct}{\mathcal R}
\newcommand{\setcom}{\mathcal C}
\newcommand{\setspeP}{\mathcal S_P}
\newcommand{\setspeQ}{\mathcal S_Q}
\newcommand{\setrctIn}{\mathcal R_{\text{in}}}
\newcommand{\setrctEx}{\mathcal R_{\text{ex}}}
\newcommand{\setrctExQ}{\mathcal R_{\text{ex}}^Q}
\newcommand{\setrctExP}{\mathcal R_{\text{ex}}^P}
\newcommand{\chem}{X_\spe}
\newcommand{\comp}[1]{Y_{\com#1}}
\newcommand{\comps}[1]{Y_{#1}}
\newcommand{\stoc}[1]{\nu_{\spe, #1\rct}}
\newcommand{\stov}[1]{\boldsymbol{\nu}_{#1\rct}}
\newcommand{\stovc}[1]{\boldsymbol{\nu}_{#1\com}}
\newcommand{\stovcf}[1]{\boldsymbol{\nu}_{#1\com(\rct)}}
\newcommand{\matS}{\nabla}
\newcommand{\colS}{\boldsymbol{\nabla}_{\rct}}
\newcommand{\colSs}[1]{\boldsymbol{\nabla}_{\rct_{#1}}}
\newcommand{\colSm}{\boldsymbol{\nabla}_{-\rct}}
\newcommand{\matI}{\partial}
\newcommand{\matIe}{\partial_{\com,\rct}}
\newcommand{\matC}{\Gamma}
\newcommand{\matCe}{\Gamma_{\spe,\com(\rct)}}
\newcommand{\matSin}{\nabla{}_{\text{in}}}
\newcommand{\matSinP}{\nabla{}^P_{\text{in}}}
\newcommand{\matSinQ}{\nabla{}^Q_{\text{in}}}
\newcommand{\matSex}{\nabla{}_{\text{ex}}}
\newcommand{\matSexP}{\nabla{}^P_{\text{ex}}}
\newcommand{\matSexQ}{\nabla{}^Q_{\text{ex}}}
\newcommand{\matIexP}{\partial^P_\text{ex}}
\newcommand{\matOinP}{\mathbb{O}^P_\text{in}}
\newcommand{\matOexQ}{\mathbb{O}^Q_\text{ex}}
\newcommand{\matOexP}{\mathbb{O}^P_\text{ex}}
\newcommand{\deff}{\delta}
\newcommand{\n}{\boldsymbol n}
\newcommand{\nex}{n}
\newcommand{\na}{\boldsymbol n_0}
\newcommand{\ns}{n_\spe}
\newcommand{\np}{\n + \colS}
\newcommand{\cc}{\boldsymbol{x}}
\newcommand{\sscc}{\boldsymbol{x}_{\mathrm{ss}}}
\newcommand{\stsp}{ \Omega(\na)}
\newcommand{\pnorm}{\mathcal N(\na)}
\newcommand{\dcurre}{u}
\newcommand{\dcurr}{\boldsymbol \dcurre}
\newcommand{\rdcurre}{\rd{u}}
\newcommand{\prob}{p_t}
\newcommand{\probss}{p_{\mathrm{ss}}}
\newcommand{\probeq}{p_{\mathrm{eq}}}
\newcommand{\probtrj}{\mathcal P}
\newcommand{\rprobtrj}{\rd{\mathcal P}}
\newcommand{\rate}[1]{\omega_{#1\rct}}
\newcommand{\ratex}[1]{\omega_{#1}}
\newcommand{\rrate}[1]{\rd{\omega}_{#1\rct}}
\newcommand{\rratex}[1]{\rd{\omega}_{#1}}
\newcommand{\currss}[1]{J{}_{#1\rct}^{\mathrm{ss}}}
\newcommand{\curreq}[1]{J{}_{#1\rct}^{\mathrm{eq}}}
\newcommand{\rcurrss}[1]{\rd{J}{}_{#1\rct}^{\mathrm{ss}}}
\newcommand{\indext}{l}
\newcommand{\nstep}{ L}
\newcommand{\nt}{\boldsymbol n}
\newcommand{\trj}[1]{\na\leadsto\n_{#1}}
\newcommand{\rtrj}[1]{\na\flowsfromb\n_{#1}}
\newcommand{\nl}{\n_\indext}
\newcommand{\nlm}{\n_{\indext-1}}
\newcommand{\nf}{\n}
\newcommand{\nfg}{\n_\nstep}
\newcommand{\rcttrja}{\rct_1}
\newcommand{\rcttrj}{\rct_\indext}
\newcommand{\rcttrjf}{\rct_\nstep}
\newcommand{\ttrja}{t_1}
\newcommand{\ttrj}{t_\indext}
\newcommand{\ttrjp}{t_{\indext+1}}
\newcommand{\ttrjf}{t_\nstep}
\newcommand{\ratel}[1]{\omega_{#1\rct_\indext}}
\newcommand{\rratel}[1]{\rd{\omega}_{#1\rct_\indext}}
\newcommand{\cycle}{\boldsymbol \phi}
\newcommand{\cyclee}{ \phi_\rct}
\newcommand{\cycleIn}{\cycle{}_{\text{in}}^{}}
\newcommand{\cycleEx}{\cycle{}_{\text{ex}}^{}}
\newcommand{\cycleExQ}{\cycle{}_{\text{ex}}^Q}
\newcommand{\cycleExP}{\cycle{}_{\text{ex}}^P}
\newcommand{\srct}{\boldsymbol \psi}
\newcommand{\srcte}{\psi_{\rct}}
\newcommand{\srctc}{\boldsymbol \psi_\text{cat}}
\newcommand{\naturaln}{{\mathbb N}}
\newcommand{\integer}{{\mathbb Z}}
\newcommand{\realp}{\mathbb R_{>0}}
\newcommand{\naturals}{\naturaln^\nspe}
\newcommand{\integerr}{\integer^\nrct}
\newcommand{\hr}[1]{h_{#1\rct}}
\newcommand{\h}{\boldsymbol{h}}
\newcommand{\generator}{\boldsymbol{b}}
\newcommand{\spex}{A}
\newcommand{\cost}{a}
\newcommand{\Dual}{Dual}
\newcommand{\dual}{dual}
\def\maketag@@@#1{\hbox{\m@th\normalfont\normalsize#1}}
\DeclareMathAlphabet{\mathpzc}{OT1}{pzc}{m}{it}
\begin{document}

\title{Deficiency, Kinetic Invertibility, and Catalysis \\
in Stochastic Chemical Reaction Networks 
}

\newcommand\unilu{\affiliation{Complex Systems and Statistical Mechanics, Department of Physics and Materials Science, University of Luxembourg, L-1511 Luxembourg City, Luxembourg}}
\author{Shesha Gopal Marehalli Srinivas}
\email{shesha.marehalli@uni.lu}
\unilu
\author{Matteo Polettini}
\email{matteo.polettini@uni.lu}
\unilu
\author{Massimiliano Esposito}
\email{massimiliano.esposito@uni.lu}
\unilu
\author{Francesco Avanzini}
\email{francesco.avanzini@uni.lu}
\unilu

%\author{}
%\email{}
%\unilu

\date{\today}

\begin{abstract}
Stochastic chemical processes are described by the chemical master equation satisfying the law of mass-action. 
We first ask whether the \dual\ master equation, 
which has the same steady state as the chemical master equation, but with inverted reaction currents,
satisfies the law of mass-action, namely, still describes a chemical process. 
We prove that the answer depends on the topological property of the underlying chemical reaction network known as deficiency.
The answer is yes only for deficiency-zero networks. 
It is no for all other networks, implying that their steady-state currents cannot be inverted by controlling the kinetic constants of the reactions.
Hence, the network deficiency imposes a form of non-invertibility to the chemical dynamics.
We then ask whether catalytic chemical networks are deficiency-zero.
We prove that the answer is no when they are driven out of equilibrium due to the exchange of some species with the environment.
\end{abstract}

\maketitle

%%%%%%%%%%%%%%%%%%%%%%%%%%%%%%%%%%%%%%%%%%%%%%%%%%%%%
%%%%%%%%%%%%%%%%%%%%%%%%%%%%%%%%%%%%%%%%%%%%%%%%%%%%%
%%%%%%%%%%%%%%%%%%%%%%%%%%%%%%%%%%%%%%%%%%%%%%%%%%%%%
%%%%%%%%%%%%%%%%%%%%%%%%%%%%%%%%%%%%%%%%%%%%%%%%%%%%%

\section{Introduction}
Open chemical reaction networks (CRNs) driven out of equilibrium constitute the underlying mechanism of many complex processes in biosystems~\cite{Yang2021} and synthetic chemistry~\cite{Hermans2017}.
Information processing~\cite{Kholodenko2006}, oscillations~\cite{novak2008}, self-replication~\cite{Segre2001,Bissette2013,Lancet2018}, self-assembly~\cite{rossum2017,ragazzon2018} and molecular machines~\cite{zerbetto2007,Cakmak2015} provide some prototypical examples.
At steady state, these CRNs operate with nonzero net reaction currents sustained by 
thermodynamic forces generated via continuous exchanges of free energy with the environment~\cite{Esposito2020}.
Their energetics can be characterized on rigorous grounds using nonequilibrium thermodynamics of CRNs 
undergoing stochastic~\cite{gaspard2004, Schmiedl2007, Rao2018b} or deterministic dynamics~\cite{Qian2005,rao2016, avanzini2021, avanzini2022}.
For instance, this theory has been used to quantify 
the energetic cost of maintaining coherent oscillations~\cite{Oberreiter2022b};
the efficiency of dissipative self-assembly~\cite{penocchio2019eff} and central metabolism~\cite{Wachtel2022};
the internal free energy transduction of a model of chemically-driven self-assembly and an experimental light-driven bimolecular motor~\cite{Penocchio2022}.
It has also been used to determine speed limits for the chemical dynamics~\cite{Yoshimura2021a, Yoshimura2021b}.
Away from equilibrium, the currents are nonlinear functions of the thermodynamic forces.
While they increase with the forces close to equilibrium (i.e., for small forces), they can also decrease far from equilibrium (i.e., for large forces)~\cite{Altaner2015, falasco2019ndr}.
They can further show a highly nonlinear response to time-dependent modulations in the forces~\cite{Samoilov2002, forastiere2020} as well as lead to the emergence of oscillations or chaos~\cite{Fritz2020, Gaspard2020}.

In this paper,
we focus on the steady-state dynamics of stochastic chemical processes described by the chemical master equation satisfying the law of mass-action.
We start by investigating whether the net currents of all reactions can be inverted by controlling the kinetic constants only. 
To do so, we use the \dual\ master equation~\cite{Kemeny1976} (also called adjoint or reversal~\cite{Norris1997, Crooks2000, Chernyak2006}) of the chemical master equation which, by definition, has the same steady state but inverted currents.
It thus describes a stochastic process, called here \dual\ process, whose dynamics is inverted compared to the chemical process (see App.~\ref{app:trarep}).
From a thermodynamic perspective, 
the \dual\ process is not in general the time-reversed process which enters the definition of the entropy production of a stochastic trajectory,
but it enters the definition of the adiabatic and nonadiabatic entropy production~\cite{Esposito2010}.
From a dynamic perspective, 
the \dual\ process is not a chemical process unless the \dual\ master equation satisfies the law of mass-action.
By building on the results derived in Refs.~\cite{Anderson2010, Cappelletti2016} (and summarized in App.~\ref{sec:ssd0}),
we prove that this happens if and only if the underlying CRN has zero deficiency.
This constitutes our first main result.
The deficiency is a topological property of CRNs, which roughly speaking quantifies the number of ``hidden'' cycles 
(i.e., cycles that have not a graphical representation in the graph of complexes)~\cite{Polettini2015}.
Physically, our result means that the network deficiency determines the kinetic invertibility (or non-invertibility) of the stochastic chemical dynamics.
We further show that the correspondence between network deficiency and invertibility is specific of the stochastic dynamics:
in the thermodynamic limit, where the dynamics becomes deterministic, the net steady-state currents can always be inverted  independently of the network deficiency.

We then investigate which CRNs are not deficiency-zero. 
We consider catalytic CRNs which are ubiquitous in nature.
From a chemical point of view, a catalyst is a substance that acts as both a reactant and a product of the reaction while increasing its rate~\cite{iupac2009}.
From this definition, necessary stoichiometric conditions for catalysis in CRNs have been recently derived 
and used as a mathematical basis to identify minimal autocatalytic subnetworks (called motifs or cores) in larger CRNs~\cite{Blokhuis2020}.
By building on these results, we prove that catalytic CRNs are not deficiency-zero when they are driven out of equilibrium due to the exchange of some species with the environment.
This constitutes our second main result.

Together, our two main results show that the net steady-state currents of stochastic catalytic CRNs driven out of equilibrium via exchanges of species with the environment
cannot be inverted by controlling the kinetic constants.
We illustrate the problems studied in this paper in Fig.~\ref{fig:logic2}.
\begin{figure}[t]
  \centering
  \includegraphics[width=1.\columnwidth]{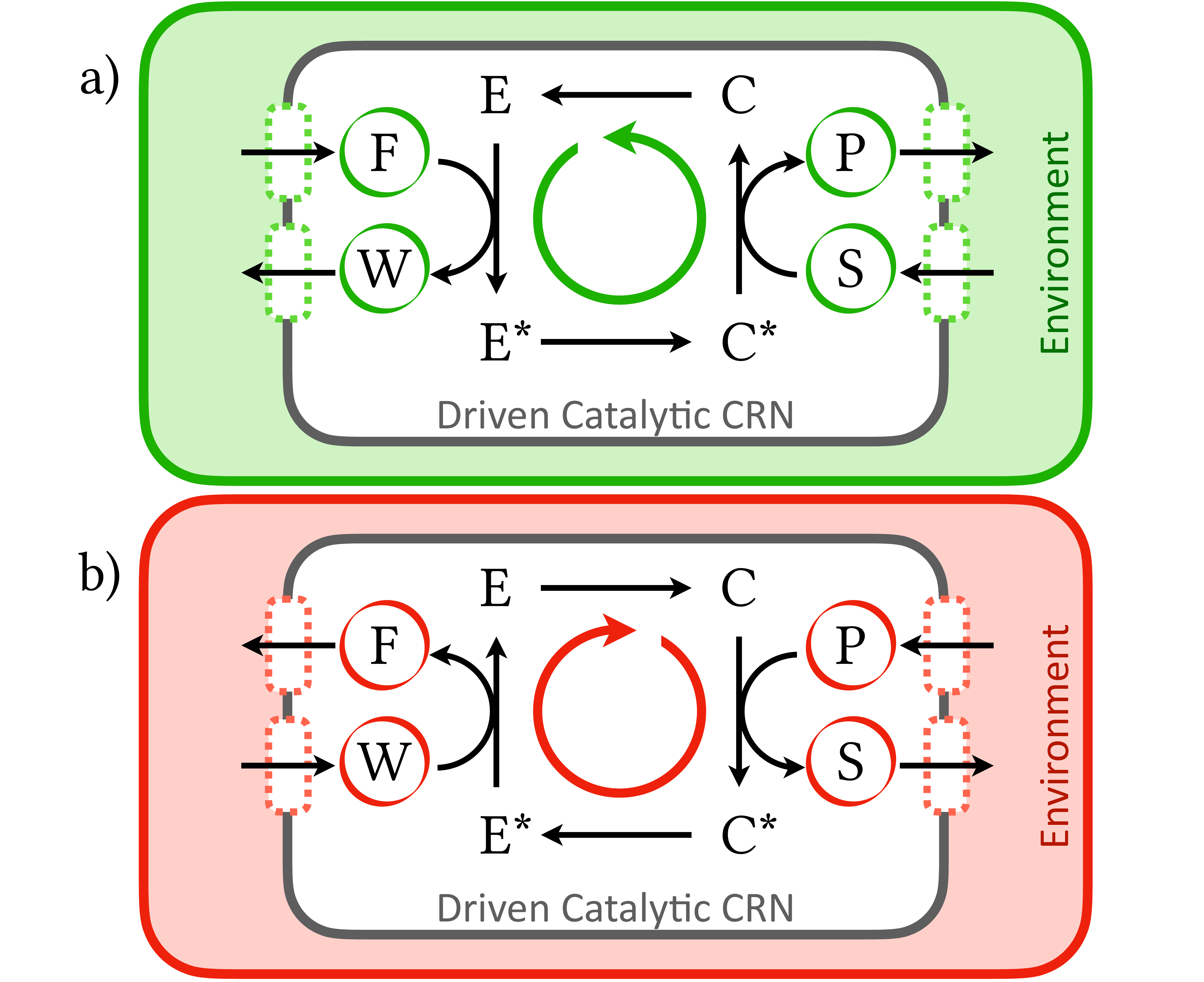}
\caption{
Illustration of a catalytic CRN driven out of equilibrium by exchange reactions (arrows crossing the boundary) with the environment.
Here, all reactions are assumed to be reversible even if they are represented with a hypergraph notation by single arrows specifying the direction of the net steady-state currents. 
The catalysts ($\ch{E}$, $\ch{E}^{*}$, $\ch{C}$, $\ch{C}^{*}$) promote the  
the interconversion of other species (fuel $\ch{F}$, waste $\ch{W}$, substrate $\ch{S}$, and product $\ch{P}$) that are exchanged with the environment.
\textbf{a)} The CRN is coupled to an environment such that the net steady-state currents of the internal reactions flow counter-clockwise.
Correspondingly, the environment continuously provides (resp. extracts) $\ch{F}$ and $\ch{S}$  (resp. $\ch{W}$ and $\ch{P}$).
\textbf{b)}  The CRN is coupled to another environment such that the chemical reactions have different kinetic constants and net steady-state currents (of both internal and exchange reactions) are inverted with respect to the case in \textbf{a)}.
In this paper, we prove that what is illustrated in \textbf{b)} cannot happen if what illustrated in \textbf{a)} happens.
Namely, there are no kinetic constants such that the net steady-state currents can be inverted in catalytic CRNs driven out of equilibrium by the exchange reactions.
}
\label{fig:logic2}
\end{figure}

Our work is organized as follows.
We start in Sec.~\ref{sec:crn} by introducing the basic setup.
In Subs.~\ref{sub:setup}, we define CRNs and their topological properties including deficiency.
In Subs.~\ref{sub:engine}, we formalize the notion of driven CRNs via exchange reactions.
In Sec.~\ref{sec:dynamics}, we examine the dynamics of chemical processes according to the chemical master equation following the law of mass-action (Subs.~\ref{sub:chem_process}) and 
we introduce their \dual\ processes described by the \dual\ master equation (Subs~\ref{sub:rev_chem_process}).
Our first main result is obtained in Sec.~\ref{sec:chem_rev}. 
Our second main result is obtained in Sec.~\ref{sec:deficiencycat}.
Finally, we discuss our results and their generalizations in Sec.~\ref{sec:discussion}.

%%%%%%%%%%%%%%%%%%%%%%%%%%%%%%%%%%%%%%%%%%%%%%%%%%%%%
%%%%%%%%%%%%%%%%%%%%%%%%%%%%%%%%%%%%%%%%%%%%%%%%%%%%%
%%%%%%%%%%%%%%%%%%%%%%%%%%%%%%%%%%%%%%%%%%%%%%%%%%%%%
%%%%%%%%%%%%%%%%%%%%%%%%%%%%%%%%%%%%%%%%%%%%%%%%%%%%%

\section{Chemical Reaction Networks\label{sec:crn}}

%%%%%%%%%%%%%%%%%%%%%%%%%%%%%%%%%%%%%%%%%%%%%%%%%%%%%
\subsection{Setup\label{sub:setup}}
In CRNs, chemical species $\{\chem\}$, identified by the indexes ${\spe\in\setspe=\{1,2,\dots,\nspe\}}$, are interconverted via chemical reactions $\rct\in\setrct=\{\pm1,\pm2,\dots,\pm\nrct\}$ 
\begin{equation}
\sum_{\spe} \stoc{}\, \chem\ch{<=>[ $\rct$ ][ $-\rct$ ]} \sum_{\spe} \stoc{-}\, \chem\,,
\label{eq:reaction}
\end{equation}
with $\stoc{}$ (resp. $\stoc{-}$) the stoichiometric coefficient of the species $\chem$ in reaction $\rct$ (resp. $-\rct$).
Chemical reactions are assumed here to be \textit{reversible}: for every (forward) reaction $\rct\in\setrct$, reaction $-\rct\in\setrct$ and denotes its backward counterpart.

An alternative representation of the chemical reactions~\eqref{eq:reaction} is given in terms of complexes $\{\comp{}\}$, identified by the indexes ${\com \in\setcom=\{1,2,\dots,\ncom\}}$, 
\begin{equation}
\comp{(\rct)} \ch{<=>[ $\rct$ ][ $-\rct$ ]} \comp{(-\rct)}\,,
\end{equation}
that are aggregates of species appearing as reactants in a reaction
\begin{equation}
\comp{(\rct)} = \sum_{\spe} \stoc{}\, \chem\,.
\end{equation}
Notice that different reactions might involve the same complex. 

%%%%%%%%%%%%%%%%%%%%%%%%%%%%%%%%%%%%%%%%%%%%%%%%%%%%%

The topology of CRNs is encoded in the stoichiometric matrix $\matS$
whose columns $\colS$ (for $\rct>0$) specify the net variation of the number of molecules for each species undergoing the $\rct$ reaction
and thus are given by 
\begin{equation}
\colS = \stov{-} - \stov{} \,,
\label{eq:matS}
\end{equation}
with $\stov{} = (\dots, \stoc{}, \dots)^{\intercal}_{\spe\in\setspe}$. 
By definition, $\colSm = -\colS$.
The stoichiometric matrix $\matS$ can also be written as the product between the composition matrix~$\matC$ and the incidence matrix~$\matI$:
\begin{equation}
\matS = \matC\matI\,.
\label{eq:matSdec}
\end{equation}
The former specifies the stoichiometric coefficient of each chemical species $\spe$ in each complex $\com$, i.e.,
\begin{equation}
\matCe = \stoc{}\,.
\end{equation}
The latter is the incidence matrix of the graph of complexes, 
namely, the graph obtained using complexes as nodes and reactions as edges.
Indeed, $\matIe$ specifies whether each complex $\com$ is a reagent (negative value) or a product (positive value) of each reaction~$\rct>0$, 
i.e.,
\begin{equation}
\matIe = \dk_{\com,\com(-\rct)} - \dk_{\com,\com(\rct)}\,,
\end{equation}
with $\dk_{i,j}$ the Kronecker delta.

Every (right) null eigenvector $\cycle$ of the stoichiometric matrix, named \textit{stoichiometric cycle},
\begin{equation}
\matS\cycle = 0\,,
\label{eq:scycle}
\end{equation}
denotes the number of times each reaction occurs along a transformation after which molecule numbers $\{\ns\}$ are restored to their initial values. Equation~\eqref{eq:matSdec} implies that any (right) null eigenvector of the incidence matrix $\matI$, named \textit{topological cycle}, is a stoichiometric cycle, but not vice-versa. 
This difference is encoded in the deficiency:
\begin{equation}
\deff = \mathrm{dim}\,\mathrm{ker} \matS - \mathrm{dim}\,\mathrm{ker}\matI\,,
\end{equation}
where $\mathrm{dim}\,\mathrm{ker}(\bullet)$ returns the dimension of the kernel of a matrix, 
i.e., the number of linearly-independent (right) null eigenvectors.
Stoichiometric cycles that are not topological cycles cannot be visualized as cycles on the graphical representation of the graph of complexes and are thus ``hidden''.

%%%%%%%%%%%%%%%%%%%%%%%%%%%%%%%%%%%%%%%%%%%%%%%%%%%%%
\subsection{Driven CRNs via Exchange Reactions~\label{sub:engine}}

%We say that CRNs are driven out of equilibrium via exchange reactions (hereafter called only driven CRNs for simplicity) when two conditions are satisfied. 

%First, the set of reactions~$\setrct$ can be split into two disjoint subsets: $\setrct = \setrctIn \cup \setrctEx$.
Let us split, without loss of generality, the set of reactions~$\setrct$ into two disjoint subsets: $\setrct = \setrctIn \cup \setrctEx$.
The set~$\setrctEx$ includes the exchange reactions with the environment represented by the complex~$\emptyset$,
that are assumed here to be reactions of the type
\begin{equation}
\chem\ch{<=>[ ${\rcts}$ ][ ${-\rcts}$ ]} \emptyset\,,
\label{eq:exrct}
\end{equation}
as done for example in Refs.~\cite{Rao2018b,DalCengio2022}.
The environment is not further specified:
it can represent chemostats as in Ref.~\cite{Rao2018b} or a larger network of reactions the CRN is embedded into.
The set~$\setrctIn$ includes the other reactions, named internal reactions.
The stoichiometric matrix can thus be written as
\begin{equation}
\matS = (\matSin\,,\matSex) \,,
\label{eq:matSinex}
\end{equation}
where $\matSin$ (resp. $\matSex$) collects the columns $\colS$ of the stoichiometric matrix with $\setrctIn\ni\rct>0$ (resp. $\setrctEx\ni\rct>0$).
Note that every column $\setrctEx\ni\rcts>0$ of $\matSex$ has all null entries 
except for the row corresponding to $\chem$ that is equal to~$-1$ because of the stoichiometry of the exchange reactions~\eqref{eq:exrct}.

%Second, the stoichiometric matrix~\eqref{eq:matSinex} admits at least one stoichiometric cycle~\eqref{eq:scycle}, which can always be written as
We now  say that CRNs are driven out of equilibrium via exchange reactions (hereafter called only driven CRNs for simplicity) when the following condition is satisfied.
The stoichiometric matrix~\eqref{eq:matSinex} admits at least one stoichiometric cycle~\eqref{eq:scycle}, which can always be written as
\begin{equation}
\cycle = \big(\cycleIn\,,\cycleEx\big)^\intercal
\label{eq:subcycle}
\end{equation}
by applying the splitting $\setrct=\setrctIn\cup\setrctEx$,
such that
\begin{equation}
\cycleEx \neq 0\,.
\label{eq:couplingEnv}
\end{equation}
The condition in Eq.~\eqref{eq:couplingEnv} also implies that $\cycleIn\neq0$ 
since $\matS\cycle=\matSin\cycleIn + \matSex\cycleEx=0$, but 
\begin{equation}
\matSex\,\cycleEx\neq0
\label{eq:cycleimplication}
\end{equation}
because each column of $\matSex$ has just one no-null entry which is equal to~$-1$.
This physically means that the cycle~\eqref{eq:subcycle} defines a transformation involving the exchange of matter with the environment (via the exchange reactions $\setrctEx\ni\rct>0$ such that $\cyclee\neq0$).

%%%%%%%%%%%%%%%%%%%%%%%%%%%%%%%%%%%%%%%%%%%%%%%%%%%%%
\paragraph*{Example.} 
Consider the following CRN
\begin{equation}
\begin{split}
\ch{F + $a$E&<=>[ ${+1}$ ][ ${-1}$ ] I }\\
\ch{I  &<=>[ ${+2}$ ][ ${-2}$ ] ($a$ + $b$)E + W}\\
\ch{E &<=>[ ${+3}$ ][ ${-3}$ ] $\emptyset$}\\
\ch{W &<=>[ ${+4}$ ][ ${-4}$ ] $\emptyset$}\\
\ch{F &<=>[ ${+5}$ ][ ${-5}$ ] $\emptyset$}\\
\end{split}\,,
\label{eq:excrn}
\end{equation}
where $a$ molecules of enzyme \ch{E} react with a fuel molecule \ch{F} producing the intermediate \ch{I} which decomposes into ${a+b}$ molecules of enzyme \ch{E} and one molecule of waste \ch{W} via the internal reactions $\setrctIn =\{\pm1\,,\pm2\}$.
The species \ch{E}, \ch{W} and \ch{F} are exchanged with the environment~$\emptyset$ via the exchange reactions $\setrctEx =\{\pm3\,,\pm4\,,\pm5\}$.
The stoichiometric matrix of the CRN~\eqref{eq:excrn} is specified by
\begin{equation}
\matS=
 \kbordermatrix{
    & \color{g}1 &\color{g}2&\color{g}3 &\color{g}4&\color{g}5\cr
    \color{g}\ch{I}    	& 1 &-1 	   & 0 & 0 & 0\cr
    \color{g}\ch{E}   	&-a & (a+b) &-1 & 0 & 0\cr
    \color{g}\ch{W}  	& 0 & 1 	   & 0 &-1 & 0\cr
    \color{g}\ch{F}   	&-1 & 0 	   & 0 & 0 &-1\cr
  }\,,
\label{eq:exmatS}
\end{equation}
which admits one stoichiometric cycle for every value of the stoichiometric coefficients $a\geq0$ and $b\geq0$: 
\begin{equation}
\cycle =(\underbrace{1,1}_{=\cycleIn},\underbrace{b,1,-1}_{=\cycleEx})^\intercal\,.
\label{eq:excycle}
\end{equation}
Since $\cycleEx\neq0$ for every value of the parameters~$a$ and~$b$, the CRN~\eqref{eq:excrn} is always driven.

When $a=b=0$, there are four complexes $\comps{1}=\ch{I}$, $\comps{2}=\ch{E}$, $\comps{3}=\ch{W}$, $\comps{4}=\ch{F}$, and  $\comps{5}=\emptyset$, and all reactions become unimolecular.
The incidence matrix reads
\begin{equation}
\matI=
 \kbordermatrix{
    & \color{g}1 &\color{g}2&\color{g}3&\color{g}4&\color{g}5\cr
    \color{g}\comps{1}   	& 1 &-1 & 0 & 0 & 0\cr
    \color{g}\comps{2} 	& 0 & 0 &-1 & 0 & 0\cr
    \color{g}\comps{3}		& 0 & 1 & 0 &-1 & 0\cr
    \color{g}\comps{4}		&-1 & 0 & 0 & 0 &-1\cr
    \color{g}\comps{5}		& 0 & 0 & 1 & 1 & 1\cr
  }\,,
\end{equation}
and admits one topological cycle $(1,1,0,1,-1)^\intercal$.
Thus, the CRN~\eqref{eq:excrn} has deficiency $\deff=0$.

For any other value of stoichiometric coefficients $a$ and $b$, the CRN~\eqref{eq:excrn} has no topological cycles and, therefore, $\deff=1$.
For instance, when $a=0$ and $b>0$, there are six complexes $\comps{1}=\ch{I}$, $\comps{2}=\ch{$b$E + W}$, $\comps{3}=\ch{E}$, $\comps{4}=\ch{W}$,  $\comps{5}=\ch{F}$, and $\comps{6}=\emptyset$, and the incidence matrix
\begin{equation}
\matI=
 \kbordermatrix{
    & \color{g}1 &\color{g}2&\color{g}3&\color{g}4&\color{g}5\cr
    \color{g}\comps{1}   	& 1 &-1 & 0 & 0 & 0\cr
    \color{g}\comps{2} 	& 0 & 1 & 0 & 0 & 0\cr
    \color{g}\comps{3}		& 0 & 0 &-1 & 0 & 0\cr
    \color{g}\comps{4}		& 0 & 0 & 0 &-1 & 0\cr
    \color{g}\comps{5}		&-1 & 0 & 0 & 0 &-1\cr
    \color{g}\comps{6}		& 0 & 0 & 1 & 1 & 1\cr
  }\,
\end{equation}
has no right null eigenvectors.
In Sec.~\ref{sec:deficiencycat} we will see that the CRN~\eqref{eq:excrn} represents a driven catalytic CRN when $a>0$ and $b\geq0$ and, consequently, 
has deficiency $\deff>0$.

Notice that the CRN~\eqref{eq:excrn} does not always have deficiency~$\deff>0$ despite always being driven.
Namely, being driven is not a sufficient condition to have deficiency~$\deff>0$.

%%%%%%%%%%%%%%%%%%%%%%%%%%%%%%%%%%%%%%%%%%%%%%%%%%%%%
%%%%%%%%%%%%%%%%%%%%%%%%%%%%%%%%%%%%%%%%%%%%%%%%%%%%%
%%%%%%%%%%%%%%%%%%%%%%%%%%%%%%%%%%%%%%%%%%%%%%%%%%%%%
\section{Stochastic Dynamics\label{sec:dynamics}}

%%%%%%%%%%%%%%%%%%%%%%%%%%%%%%%%%%%%%%%%%%%%%%%%%%%%%
\subsection{Chemical Process\label{sub:chem_process}}
Each reaction~\eqref{eq:reaction} is assumed here to be a stochastic event. 
The integer-valued population state $\n =(\dots,\ns,\dots)^{\intercal}_{\spe\in\setspe}$ specifying the number of molecules of each species is a fluctuating variable. 
Chemical processes are described in terms of Markov jump processes,
and the probability $\prob(\n)$ of there being $\n$ molecules at time $t$ 
follows the chemical master equation~\cite{McQuarrie1967,Gillespie1992}
\begin{equation}
\dt \prob(\n) = \sum_{\rct\in\setrct}\{\rate{-}(\np)  \prob(\np) - \rate{}(\n) \prob(\n) \}\,,
\label{eq:cme}
\end{equation}
where the reaction rates satisfy the law of mass-action~\footnote{
The law of mass-action is satisfied for elementary reactions in ideal dilute solutions, namely, chemical species are noninteracting and there is a species, called the solvent, which is not involved in the chemical reactions and is much more abundant than all the other species.
If reactions are not elementary or the chemical species interact, reaction rates might not satisfy the law of mass-action.
In this work, we consider only elementary reactions in ideal dilute solutions.
}:
\begin{equation}
\rate{}(\n) = k_{\rct}\frac{\n!}{(\n-\stov{})!}\,,
\label{eq:massaction}
\end{equation}
with $\n! = \prod_{\spe}\ns!$ and $ k_{\rct}>0$ being the kinetic rate constant of reaction $\rct$.
Note that if $\n<\stov{}$ reaction $\rct$ cannot occur and hence $\rate{}(\n) =0$.
We {assume} that the dynamics occurs on a so-called stoichiometric subspace $\stsp$, namely, the set of all populations connected to $\na$ by arbitrary sequences of reactions: $\stsp = \{ \na + \matS\srct\geq0\, |\, \forall \srct\in \integerr\}$.

%If the support of $\prob(\n)$ can be restricted to a finite space, 
%the Perron-Frobenius theorem ensures that 
We further assume that the chemical master equation~\eqref{eq:cme} admits a unique steady-state probability $\probss(\n)$ such that
\begin{equation}
\sum_{\rct\in\setrct}\{\rate{-}(\np)  \probss(\np) - \rate{}(\n) \probss(\n) \}=0\,.
\label{eq:ss}
\end{equation}
If $\stsp$ is finite, the existence of $\probss(\n)$ is granted by the Perron-Frobenius theorem.
In general, the net steady-state current of the reactions along each stoichiometric cycle~\eqref{eq:scycle},
namely, $\rct>0$ such that $\cyclee\neq 0$ for every $\cycle$, does not vanish~\cite{Schnakenberg1976}:
\begin{equation}
\currss{} (\n)= \rate{}(\n) \probss(\n) - \rate{-}(\np)  \probss(\np) \neq 0\,,
\end{equation}
which physically means that these reactions continuously occur in a preferential direction (either forward if $\currss{} (\n) > 0$ or backward if $\currss{} (\n) < 0$) despite the CRN being at steady state.

Only when CRNs have no stoichiometric cycles~\eqref{eq:scycle}
(or the kinetic constants have specific values satisfying the Wegscheider’s condition~\cite{Schuster1989} for every stoichiometric cycle, i.e., $\prod_{\rct>0}(k_{\rct}/k_{-\rct})^{\cyclee}=1$ $\forall \cycle$),
the steady-state probability is an equilibrium probability satisfying the detailed balance condition
\begin{equation}
\rate{}(\n) \probeq(\n) = \rate{-}(\np)  \probeq(\np)\,,
\label{eq:db}
\end{equation}
and $\curreq{} (\n) =0$ for every $\rct>0$.

%%%%%%%%%%%%%%%%%%%%%%%%%%%%%%%%%%%%%%%%%%%%%%%%%%%%%
\subsection{\Dual\ Process\label{sub:rev_chem_process}} 
The \dual\ master equation~\cite{Kemeny1976} (also called adjoint or reversal~\cite{Norris1997}) of the chemical master equation~\eqref{eq:cme} is endowed with rates
\begin{equation}
\rrate{}(\n) = \rate{-}(\np)\frac{\probss(\np)}{\probss(\n)}\,,
\label{eq:rrate}
\end{equation}
where $\rate{}(\n)$ and ${\probss(\n)}$ are given in Eqs.~\eqref{eq:massaction} and~\eqref{eq:ss}, respectively.
The resulting Markov jump process, called here \dual\ process, is not an abstract mathematical construction:
by definition, the \dual\ master equation has the same steady-state probability as the original chemical master equation, i.e.,
\begin{equation}
\sum_{\rct\in\setrct}\{\rrate{-}(\np)  \probss(\np) - \rrate{}(\n) \probss(\n) \}=0\,,
\label{eq:ssr}
\end{equation}
but with inverted net steady-state currents, i.e., 
\begin{equation}\small
\rcurrss{}(\n) = \rrate{}(\n) \probss(\n) - \rrate{-}(\np)  \probss(\np) = -\currss{}(\n)\,.
\end{equation}
Furthermore, at steady state the evolution of the population~$\n$ in terms of stochastic trajectories in the \dual\ process is inverted compared to the original chemical process as shown in App.~\ref{app:trarep}.
However, the \dual\ master equation does not describe in general chemical processes:
the \dual\ rates~\eqref{eq:rrate} do not satisfy the law of mass-action~\eqref{eq:massaction}.

%%%%%%%%%%%%%%%%%%%%%%%%%%%%%%%%%%%%%%%%%%%%%%%%%%%%%
%%%%%%%%%%%%%%%%%%%%%%%%%%%%%%%%%%%%%%%%%%%%%%%%%%%%%
%%%%%%%%%%%%%%%%%%%%%%%%%%%%%%%%%%%%%%%%%%%%%%%%%%%%%
\section{Kinetic Invertibility\label{sec:chem_rev}}
We investigate now when the \dual\ process is still a chemical process,
namely, when there exist kinetic constants~${\rd{k}_{\rct}>0}$ such that the \dual\ rates~\eqref{eq:rrate} satisfy the law of mass-action:
\begin{equation}
\rrate{}(\n) =  \rd{k}_{\rct}\frac{\n!}{(\n-\stov{})!}\,.
\label{eq:rratema}
\end{equation}
We call this condition kinetic invertibility.
In particular, we prove that Eq.~\eqref{eq:rratema} holds if and only if CRNs are deficiency-zero, 
besides the trivial case when the detailed balance condition~\eqref{eq:db} is satisfied (implying $\rrate{}(\n) = \rate{}(\n)$),
as summarized in Fig.~\ref{fig:reversed}.
\begin{figure}[t]
  \centering
  \includegraphics[width=0.99\columnwidth]{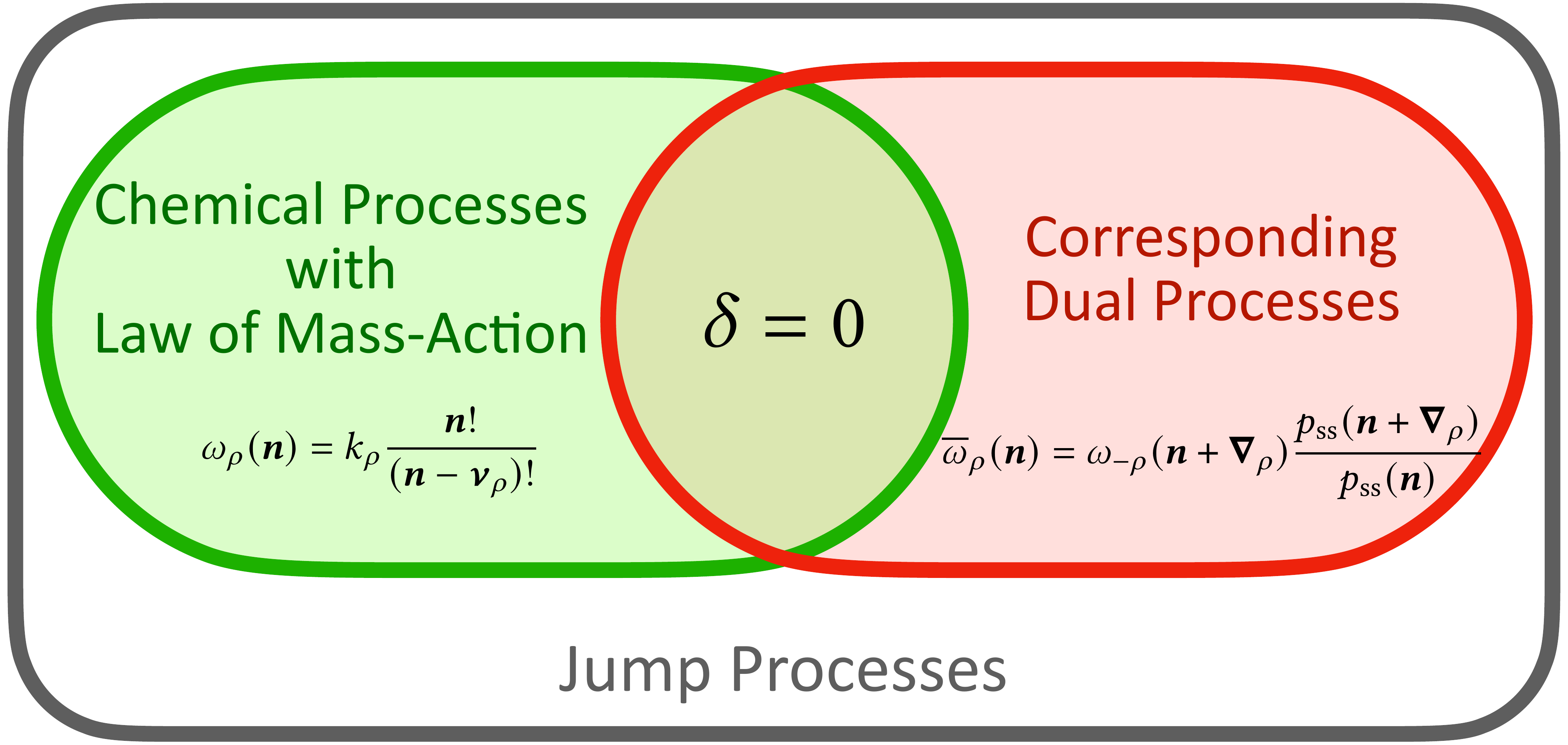}
\caption{
The (gray) box represents the set of all possible Markov jump processes, characterized by different transition rates, on the state space  $\{\n\}$.
One subset (green oval) contains the chemical processes with rates satisfying the law of mass-action. 
Depending on the specific transition rates, the chemical processes represent different CRNs with different deficiency.
The corresponding \dual\ processes are contained in another subset (red oval) and their rates do not satisfy in general the law of mass-action.
{Except for a set of null measure}, 
only for deficiency-zero CRNs, i.e., $\deff =0$, the subset of chemical processes and the subset of the corresponding \dual\ processes intersect, namely, the \dual\ processes are still chemical processes.
}
\label{fig:reversed}
\end{figure}

%%%%%%%%%%%%%%%%%%%%%%%%%%%%%%%%%%%%%%%%%%%%%%%%%%%%%

\begin{theorem}
\label{th:timereversal}
On a reversible CRN endowed with mass-action rates, 
the \dual\ rates satisfy the law of mass-action for all values of the rate constants if and only if the CRN has zero deficiency,
in which case the \dual\ rate constants are given by
\begin{equation}
\rd{k}_{\rct} = k_{-\rct} \, \sscc^{\colS}\,,
\label{eq:rdk}
\end{equation} 
where $\sscc\in\realp^\nspe$ is the fixed point of the corresponding deterministic dynamics
, i.e., $\matS \dcurr(\sscc)= 0$ with $\dcurr(\cc)= (\dots,\dcurre_{\rct}(\cc),\dots)^\intercal_{\rct>0}$ and $ \dcurre_{\rct}(\cc)= k_{\rct} \cc^{\stov{}} - k_{-\rct} \cc^{\stov{-}}$ 
(here we used the following notation $\vvv^\www = \prod_\iii{\vvve_\iii}^{\wwwe_\iii}$ for every pair of vectors $\vvv$ and $\www$).
\end{theorem}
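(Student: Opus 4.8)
The plan is to prove the two implications of the equivalence separately, in both directions going through the characterization recalled in App.~\ref{sec:ssd0}, which links product-form Poisson stationary distributions of the chemical master equation to complex-balanced equilibria of the associated deterministic mass-action dynamics $\dt\cc=\matS\dcurr(\cc)$.

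For the ``if'' direction I would assume $\deff=0$. Since the CRN is reversible, hence weakly reversible, App.~\ref{sec:ssd0} guarantees that for every choice of the rate constants the deterministic dynamics admits a complex-balanced fixed point $\sscc\in\realp^\nspe$, i.e. $\matS\dcurr(\sscc)=0$, and that the unique steady state of~\eqref{eq:ss} on $\stsp$ is the conditioned Poisson-product distribution $\probss(\n)\propto\sscc^{\n}/\n!$. Substituting this into the definition~\eqref{eq:rrate} and using that $\colS=\stov{-}-\stov{}$ implies $\np-\stov{-}=\n-\stov{}$, one gets $\rate{-}(\np)=k_{-\rct}\,(\np)!/(\n-\stov{})!$ and $\probss(\np)/\probss(\n)=\sscc^{\colS}\,\n!/(\np)!$, so the factorials cancel and $\rrate{}(\n)=k_{-\rct}\,\sscc^{\colS}\,\n!/(\n-\stov{})!$. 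This is exactly the mass-action form~\eqref{eq:rratema} with $\rd{k}_{\rct}=k_{-\rct}\sscc^{\colS}>0$; the value is independent of the chosen complex-balanced fixed point because any two of them differ multiplicatively by a vector $\exp w$ with $w$ orthogonal to $\mathrm{im}\,\matS$, while $\colS\in\mathrm{im}\,\matS$, so $\sscc^{\colS}$ is unchanged. States with $\n\not\geq\stov{}$ are trivial, both sides vanishing there.

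For the ``only if'' direction I would assume the \dual\ rates obey mass-action for all rate constants and fix one admissible choice. Equating $\rrate{}(\n)=\rd{k}_{\rct}\,\n!/(\n-\stov{})!$ with~\eqref{eq:rrate} and using again $\np-\stov{-}=\n-\stov{}$ gives, for $q(\n):=\probss(\n)\,\n!$, the relation $q(\np)/q(\n)=\rd{k}_{\rct}/k_{-\rct}$, whose right-hand side does not depend on $\n$. Since $\probss>0$ on $\stsp$ and $\stsp$ is connected under reactions, $\log q$ is then an affine function on $\stsp$: writing $\n-\na=\matS\srct$ with $\srct\in\integerr$, single-valuedness of $q$ forces $\log(\rd{k}_{\rct}/k_{-\rct})$, viewed as a vector over $\rct>0$, to vanish on every stoichiometric cycle, hence to lie in $\mathrm{im}\,\matS^\intercal$, which yields $\probss(\n)\propto\boldsymbol a^{\n}/\n!$ for some $\boldsymbol a\in\realp^\nspe$. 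Reinserting this Poisson-product form into~\eqref{eq:ss} and regrouping the terms by complex, as in App.~\ref{sec:ssd0}, shows that $\boldsymbol a$ is a complex-balanced equilibrium of the deterministic dynamics at the chosen rate constants; since the hypothesis holds for \emph{all} rate constants, the deterministic system is complex-balanced for every rate-constant vector, which by Horn's theorem (cf. App.~\ref{sec:ssd0}) is possible for a reversible network only if $\deff=0$.

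The main obstacle will be this last step: the factorial manipulations are routine, and the real content is carried by two imported structural facts — the equivalence ``Poisson-product stationary distribution $\Leftrightarrow$ complex-balanced deterministic equilibrium''~\cite{Anderson2010,Cappelletti2016} and the theorem that complex balancing for \emph{all} rate constants is equivalent to zero deficiency. Secondary care is needed at the boundary of $\stsp$, where some rates vanish, and, when $\stsp$ is infinite, in the normalizability of the conditioned Poisson distribution (already assumed here together with the existence and uniqueness of $\probss$).
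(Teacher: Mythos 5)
Your proposal is correct and follows essentially the same route as the paper: sufficiency by inserting the Anderson--Craciun--Kurtz Poisson-like steady state into the dual rates so the factorials cancel, and necessity by showing that the log-ratios $\log(\rd{k}_\rct/k_{-\rct})$ must vanish on every stoichiometric cycle, hence lie in the image of $\matS^\intercal$, forcing a Poisson-like $\probss$ whose parameter is a complex-balanced fixed point, and then invoking the ``complex balanced for all rate constants $\Rightarrow$ $\deff=0$'' result of Refs.~\cite{Cappelletti2016,Polettini2015}. The only point handled more explicitly in the paper is that realizing all stoichiometric cycles as closed trajectories requires choosing a stoichiometric subspace where every reaction can occur, which you flag only as ``secondary care at the boundary.''
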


\begin{proof}
We use two previous results (whose derivation is summarized in App.~\ref{sec:ssd0}) connecting the steady-state probability of CRNs and their deficiency.
The first result~\cite{Anderson2010} states that 
if a CRN has zero deficiency, 
then its steady-state probability $\probss(\n)$ on $\stsp$ is given by the Poisson-like distribution
\begin{equation}
\probss(\n) = \frac{1}{\pnorm}\frac{\sscc^{\n}}{\n!}\,,
\label{eq:sspl}
\end{equation}
where $\pnorm$ is the normalization constant over $\stsp$.
Note that $\probss(\n)$ in Eq.~\eqref{eq:sspl} is ``Poisson-like'' and not simply ``Poisson'' because $\stsp$ does not necessarily span all of $\naturals$.
The second results~\cite{Cappelletti2016} states that a generalized converse is also true:
if for any value of the kinetic constants the steady-state probability of a CRN on any stoichiometric subspace $\stsp$ is of the form Eq.~\eqref{eq:sspl}, with parameter $\sscc$ independent of the stoichiometric subspace, then the CRN has zero deficiency.

{\it Sufficiency.} If a CRN has zero deficiency, then it admits the Poisson-like steady-state probability~\eqref{eq:sspl} and Eq.~\eqref{eq:rratema} is recovered from Eq.~\eqref{eq:rrate}. 

{\it Necessity.} If the \dual\ rates satisfy the law of mass-action~\eqref{eq:rratema}, then Eq.~\eqref{eq:rrate} leads to
\begin{equation}
\frac{(\np)!\,\probss(\np)}{\n!\,\probss(\n)}=\frac{\rd{k}_{\rct}}{k_{-\rct}}=:\exp{\hr{}}\,.
\label{eq:defh}
\end{equation}
The right-most equality defines~$\hr{}$ which is independent of the population $\n$, and so of the stoichiometric subspace~$\stsp$.
Furthermore, $\hr{}$ is antisymmetric, i.e., $\hr{-}= -\hr{}$. 
Now, given an arbitrary reference population $\na$, consider 
the succession of reactions $\{\rcttrja,\dots,\rcttrj,\dots,\rcttrjf\}$ such that $\na + \sum_{\indext=1}^{\nstep} \colSs{\indext}= \nt$.
Then, we define $\srcte(\trj{}) := \sum_{\indext=1}^{\nstep} (\dk_{\rct, \rcttrj} - \dk_{-\rct, \rcttrj} ) \in \integer$ as the net number of times reaction $\rct>0$ occurs (negative signs identify reactions occurring in backward direction) along the trajectory $\trj{}$.
By definition, we have
\begin{equation}
\nt - \na =  \sum_{\indext=1}^{\nstep} \colSs{\indext} = \matS \srct(\trj{})\,,
\label{eq:rctcount}
\end{equation}
with $\integer^\nrct\ni\srct(\trj{}) = (\dots, \srcte(\trj{}),\dots )^\intercal_{\rct>0}$.
Taking products of Eq.~\eqref{eq:defh} along the trajectory $\trj{}$ yields
\begin{equation}
\frac{\nt!\,\probss(\nt)}{\na!\,\probss(\na)}=:\exp({\h\cdot \srct(\trj{}) })\,,
\label{eq:rctcount_impl}
\end{equation}
with $\h = (\dots,\hr{},\dots)_{\rct > 0}$.
Now consider a closed trajectory such that $\nt = \na$.
{First, Eq.~\eqref{eq:rctcount} implies $\matS\srct(\trj{0})=0$,
i.e., $\srct(\trj{0})$ is an integer (right) null vector of~$\matS$, i.e, a cycle $\cycle = \srct(\trj{0})$ as defined in Eq.~\eqref{eq:scycle}.
Second, Eq.~\eqref{eq:rctcount_impl} implies 
\begin{equation}
\h\cdot \srct(\trj{0}) = 0\,.
\label{eq:hcurr}
\end{equation}
Since $\h$ is independent of the stoichiometric subspace $\stsp$, Eq.~\eqref{eq:hcurr} must hold for every $\na\in\naturals$.
Thus, we can always make $\srct(\trj{0})$ span the null space of $\matS$ by selecting different closed trajectories in a stoichiometric subspace $\stsp$ where all reactions can occur (namely, where there is no reaction $\rct\in\setrct$ such that $\n<\stov{}$ for all $\n\in\stsp$).
This implies that $\h$ must be orthogonal to all null vectors of $\matS$, that is, it must live in the image of $\matS$:
there exists $\generator  = ( b_1,  b_2, \dots, b_\nspe)^\intercal$ such that
}
\begin{align}
\h = \generator \cdot \matS\,.  
\label{eq:hh}
\end{align}
Note that $\generator$ is also independent of the stoichiometric subspace $\stsp$ since $\h$ is independent of $\stsp$.
Finally, for an open trajectory Eqs.~\eqref{eq:rctcount},~\eqref{eq:rctcount_impl} and~\eqref{eq:hh} yield
\begin{subequations}
\begin{align}
\probss(\nt) 
& =  \frac{\na! \, \probss(\na)}{\nt!}  \exp ( \generator \cdot \matS \srct(\trj{}))\\ 
& =   \frac{\na! \, \probss(\na)}{\nt!}  \, \exp( \generator \cdot (\nt - \na))\\
& =  \frac{\na! \, \probss(\na)}{ \exp( \generator \cdot  \na)} \frac{ \exp( \generator \cdot \nt )}{\nt!}  
\end{align}%
\end{subequations}
which,
upon the identification $\sscc = \exp \generator$
(where we used the notation $\exp\generator = (\exp b_1, \exp b_2,\dots, \exp b_\nspe)^\intercal $)
and ${\pnorm} = (\exp(\generator \cdot \na))/(\na!\, \probss(\na) )$,
is  the Poisson-like distribution given in Eq.~\eqref{eq:sspl}.
Since this is the steady-state probability for all values of the rate constants only if the CRN has zero deficiency, 
then having \dual\ rates satisfying the law of mass-action implies that the CRN has zero deficiency.
\end{proof}

Notice that there could be sets of kinetic constants of null measure such that the \dual\ rates satisfy the law of mass-action, but the CRN has deficiency $\deff>0$ (see App.~\ref{sec:ssd0}).

The physical implication of our theorem is that the dynamics of CRNs with non-zero deficiency cannot be inverted even if the kinetic constants of all chemical reactions are controlled.

\paragraph*{Example.} 
Consider the following CRN
\begin{equation}
\begin{split}
\ch{\spex&<=>[ ${+1}$ ][ ${-1}$ ] 2 \spex }\\
\ch{\spex &<=>[ ${+2}$ ][ ${-2}$ ] $\emptyset$}
\end{split}\,,
\label{eq:excrn2}
\end{equation}
with deficiency $\deff = 1$.
Here, $\nex\in\stsp = \naturaln$ is the number of molecules of species $\ch{\spex}$ 
and, according to Eq.~\eqref{eq:massaction},
\begin{subequations}
\begin{align}
\ratex{+1}(\nex) &= k_{+1}\nex\,, \\
\ratex{-1}(\nex) &= k_{-1}\nex(\nex -1)\,, \\
\ratex{+2}(\nex) &= k_{+2}\nex\,, \\
\ratex{-2}(\nex) &= k_{-2}\,,
\end{align}
\label{eq:exrates}%
\end{subequations}
are the reaction rates. 
We now determine the analytical expression of the \dual\ rates~\eqref{eq:rrate} and we show they satisfy the law of mass-action~\eqref{eq:rratema} only for specific values of the kinetic constants $\{k_{\pm1}, k_{\pm2}\}$ of null measure.

First, by using Eq.~\eqref{eq:ss}, we find that the steady-state probability is given by 
\begin{equation}
\frac{\probss(\nex+1)}{\probss(\nex)} = \frac{k_{-2}+ k_{+1}\nex}{(\nex + 1)(k_{+2}+k_{-1}\nex)}\,,
\label{eq:exss}
\end{equation}
which, together with Eqs.~\eqref{eq:rrate} and~\eqref{eq:exrates}, leads to the following \dual\ rates:
\begin{subequations}
\begin{align}
\rratex{+1}(\nex) &= k_{-1}\nex \frac{k_{-2}+ k_{+1}\nex}{k_{+2}+k_{-1}\nex}\,, \\
\rratex{-1}(\nex+1) &= k_{+1}\nex \frac{(\nex + 1)(k_{+2}+k_{-1}\nex)}{k_{-2}+ k_{+1}\nex}\,, \\
\rratex{+2}(\nex+1) &= k_{-2} \frac{(\nex + 1)(k_{+2}+k_{-1}\nex)}{k_{-2}+ k_{+1}\nex}\,, \\
\rratex{-2}(\nex) &= k_{+2} \frac{k_{-2}+ k_{+1}\nex}{k_{+2}+k_{-1}\nex}\,.
\end{align}
\label{eq:exrrates}%
\end{subequations}
Second, by assuming now that the above \dual\ rates satisfy the law of mass-action according to Eq.~\eqref{eq:rratema}, 
we find that the kinetic constants $\{\rd{k}_{\pm1}, \rd{k}_{\pm2}\}$ read
\begin{subequations}
\begin{align}
\rd{k}_{+1}&= k_{-1} \frac{k_{-2}+ k_{+1}\nex}{k_{+2}+k_{-1}\nex}\,, \\
\rd{k}_{-1} &= k_{+1} \frac{k_{+2}+k_{-1}\nex}{k_{-2}+ k_{+1}\nex}\,, \\
\rd{k}_{+2} &= k_{-2} \frac{k_{+2}+k_{-1}\nex}{k_{-2}+ k_{+1}\nex}\,, \\
\rd{k}_{-2}&= k_{+2} \frac{k_{-2}+ k_{+1}\nex}{k_{+2}+k_{-1}\nex}\,.
\end{align}
\label{eq:exkconstant}%
\end{subequations}
This can only happen if the right hand sides of Eqs.~\eqref{eq:exkconstant} are constant for every $\nex$,
namely, if the kinetic constants $\{k_{\pm1}, k_{\pm2}\}$ satisfy $(k_{-2}+ k_{+1}\nex) = \cost(k_{+2}+k_{-1}\nex)$ $\forall \nex$ with $\cost\in\realp$, or equivalently
\begin{equation}
 k_{-2} = \cost k_{+2}\,,\qquad
 k_{+1} = \cost k_{-1}\,.
 \label{eq:exconditionk}
\end{equation}
This set of kinetic constants is of null measure compared to the set of all possible kinetic constants
and makes the CRN~\eqref{eq:excrn2} detailed balance.

Note that by using Eq.~\eqref{eq:exconditionk} in Eq.~\eqref{eq:exss} and determining the corresponding normalization constant, the steady-state probability~\eqref{eq:exss} becomes a Poisson distribution:
\begin{equation}
\probss(\nex) = e^{-\cost}\frac{\cost^\nex}{\nex!}\,.
\end{equation}

%%%%%%%%%%%%%%%%%%%%%%%%%%%%%%%%%%%%%%%%%%%%%%%%%%%%%
%%%%%%%%%%%%%%%%%%%%%%%%%%%%%%%%%%%%%%%%%%%%%%%%%%%%%
%%%%%%%%%%%%%%%%%%%%%%%%%%%%%%%%%%%%%%%%%%%%%%%%%%%%%
\section{Deficiency in Catalytic CRNs~\label{sec:deficiencycat}} 
According to the International Union of Pure and Applied Chemistry~\cite{iupac2009}, a catalyst is 
``a substance that increases the rate of a reaction without modifying the overall standard Gibbs energy change in the reaction; 
the process is called catalysis. 
The catalyst is both a reactant and product of the reaction.'' 
Furthermore, ``catalysis brought about by one of the products of a net reaction is called autocatalysis.''
From this definition, necessary stoichiometric conditions for catalysis in CRNs have been recently defined in Ref.~\cite{Blokhuis2020},
where also the term allocatalysis has been introduced to refer to the case of standard catalysis and contrast it with the case of autocatalysis.
Here, we build on these stoichiometric conditions to show that catalytic CRNs have deficiency $\deff>0$ when driven by exchange reactions (see Subs.~\ref{sub:engine}).

%%%%%%%%%%%%%%%%%%%%%%%%%%%%%%%%%%%%%%%%%%%%%%%%%%%%%
\subsection{Stoichiometric Conditions for Catalytic CRNs~\label{subs:defcat}} 
We reformulate here the stoichiometric conditions for catalysis, given in Ref.~\cite{Blokhuis2020}, for the framework introduced in Sec.~\ref{sec:crn}.
A CRN is said to be \text{catalytic} if 
the set of chemical species $\setspe$ can be split into two disjoint subsets $\setspe = \setspeQ \cup \setspeP$ 
such that the (sub)stoichiometric matrix for the internal reactions can be written as
\begin{equation}
\renewcommand*{\arraystretch}{1.4}
\matSin = 
	\begin{pmatrix}
		\matSinQ\\
		\matSinP
	\end{pmatrix} \, ,
\label{eq:matSinQP}
\end{equation}
where i) $\matSinQ$ is autonomous (i.e., each column of $\matSinQ$ has at least one positive and one negative entry), and
ii) there is a reaction vector $\srctc$ satisfying 
\begin{subequations}
\begin{align}
&\matSinQ \srctc \geq 0\,,\label{eq:cond_cat}\\
&\matSinP \srctc \neq 0\,,\label{eq:cond_sub}
\end{align} \label{eq:cond}%
\end{subequations}
where the inequality in Eq.~\eqref{eq:cond_cat}, as well as any inequality involving vectors in the following, must hold entry-wise.
%(hereafter, for every vector $\vvv$ the inequalities $\vvv\geq 0 $ and $\vvv > 0$ must hold entry-wise).
The $\nspeq$ species $\{\chem\}_{\spe\in\setspeQ}$ are the catalysts, while the $\nspep$ species $\{\chem\}_{\spe\in\setspeP}$ are the substrates (with $\nspeq + \nspep = \nspe$).
The condition of autonomy ensures that the production of each catalyst in the internal reactions is conditioned on the presence of other catalysts
in agreement with the definition given in Ref.~\cite{iupac2009}: a ``catalyst is both a reactant and product of the reaction''.
Notice that the condition of autonomy is not restrictive as catalysts can be exchanged with the environment (via reaction~\eqref{eq:exrct}), where they can be produced/consumed by other chemical reactions.

\paragraph*{Remark.} 
In Ref.~\cite{Blokhuis2020}, $\matSinQ$ is also assumed to be non-ambiguous: each species can participate in a chemical reaction as either a reagent or a product.
Reactions like $\ch{F + E <=> E + W}$ or $\ch{F + E <=> 2 E}$ (where $\ch{E}$ participate as both a reagent and a product) are therefore excluded.
Non-ambiguity is essential to the decomposition (done in Ref.~\cite{Blokhuis2020}) of autocatalytic CRNs into minimal autocatalytic subnetworks, named autocatalytic motifs or cores. 
In our analysis, non-ambiguity is not required.

The general conditions for catalysis in Eq.~\eqref{eq:cond} can be further specialized for allocatalysis and autocatalysis.
On the one hand, if 
\begin{equation}
\matSinQ \srctc = 0\label{eq:cond_allo}
\end{equation}
the CRN is said to be allocatalytic and the species $\{\chem\}_{\spe\in\setspeQ}$ are named allocatalysts.
In this case, $\srctc$ physically defines the number of times each internal reaction occurs along an allocatalytic cycle, 
namely, a sequence of reactions that upon completion leaves the abundances of the allocatalysts unchanged (Eq.~\eqref{eq:cond_allo}), while interconverting the substrates via an effective reaction whose stoichiometry is specified by $\matSinP \srctc$ in Eq.~\eqref{eq:cond_sub}.
On the other hand, if 
\begin{equation}
\matSinQ \srctc > 0\label{eq:cond_auto}\,,
\end{equation}
%{\color{blue}
%where the inequality in Eq.~\eqref{eq:cond_auto} must hold entry-wise,
%}
the CRN is said to be autocatalytic and the species $\{\chem\}_{\spe\in\setspeQ}$ are named autocatalysts.
In this case, $\srctc$ physically defines the number of times each internal reaction occurs to produce a net increase in the abundances of the autocatalysts (Eq.~\eqref{eq:cond_auto}), while interconverting the substrates according to the stoichiometry specified by $\matSinP \srctc$ in Eq.~\eqref{eq:cond_sub}.

\paragraph*{Example.}  
In the CRN~\eqref{eq:excrn}, one can classify $\{\ch{I}\,,\ch{E}\}$ and $\{\ch{W}\,,\ch{F}\}$ as the sets of catalysts and substrates, respectively.
Thus, the stoichiometric matrix~\eqref{eq:exmatS} can be split in the following submatrices 
\begin{equation}
\matSinQ=
 \kbordermatrix{
    & \color{g}1 &\color{g}2		  \cr
    \color{g}\ch{I}    	& 1 &-1 	  \cr
    \color{g}\ch{E}   	&-a & (a+b) \cr
  }\,,\text{ }\text{ }\text{ }\text{ }\text{ }
\matSinP=
 \kbordermatrix{
    & \color{g}1 &\color{g}2		  \cr
    \color{g}\ch{W}  	& 0 & 1 	  \cr
    \color{g}\ch{F}   	&-1 & 0 	  \cr
  }\,
\label{eq:exmatSQP} 
\end{equation}
and
\begin{equation}
\matSex=
 \kbordermatrix{
    & \color{g}3 &&\color{g}4&\color{g}5\cr
    \color{g}\ch{I}   	& 0 &\vrule& 0 & 0\cr
    \color{g}\ch{E}   	&-1 &\vrule& 0 & 0\cr\cline{2-5}
    \color{g}\ch{W}  	& 0 &\vrule&-1 & 0\cr
    \color{g}\ch{F}   	& 0 &\vrule& 0 &-1\cr
  }\,,
  \label{eq:exmatSex}
\end{equation}
where $\matSinQ$ is autonomous for $a> 0$, and vertical and horizontal lines in $\matSex$ in Eq.~\eqref{eq:exmatSex} are explained in Subs.~\ref{subs:catce}.

On the one hand, if $a>0$ and $b=0$, the CRN~\eqref{eq:excrn} is allocatalytic.
Indeed, we can identify the reaction vector $\srctc =(1\,,1)^\intercal$ such that $\matSinQ \srctc = 0$ and $\matSinP\srctc = (1\,, -1)^\intercal\neq 0$, 
while there is no $\srctc$ such that $\matSinQ \srctc > 0$.
Along the allocatalytic cycle $\srctc$, 
the abundance of the autocatalysts remains constant and substrates are interconverted by the effective chemical reaction
\begin{equation}
\ch{F <=>[ ${\srctc}$ ][ ${}$ ] W}\,.
\end{equation}

On the other hand, if $a>0$ and $b>0$, the CRN~\eqref{eq:excrn} is autocatalytic.
Indeed, we can identify the reaction vector $\srctc =\big((a + b +1)\,, (a+1)\big)^\intercal$ such that $\matSinQ \srctc = (b\,,b)^\intercal>0$
and $\matSinP\srctc = \big((a+1)\,, -(a+b+1)\big)^\intercal\neq 0$, 
while there is no $\srctc$ such that $\matSinQ \srctc = 0$ since $\matSinQ$ is full rank.
Along the autocatalytic sequence of reactions  $\srctc$, 
the abundance of the autocatalysts increases by $b$ and substrates are interconverted according to the stoichiometry
\begin{equation}
\ch{$(a+b+1)$ F <=>[ ${\srctc}$ ][ ${}$ ] $(a+1)$ W}\,.
\end{equation}

%%%%%%%%%%%%%%%%%%%%%%%%%%%%%%%%%%%%%%%%%%%%%%%%%%%%%
\subsection{Driven Catalytic CRNs~\label{subs:catce}}
We specialize here the notion of driven CRNs, defined in Subs.~\ref{sub:engine}, to catalytic CRNs.
To do so, we recognize that, from a chemical point of view~\cite{iupac2009}, catalytic CRNs must primarily interconvert substrates since the ``catalyst is both a reactant and product of the reactions''.
Thus, we assume that a catalytic CRN is driven 
when \textit{at least} some substrates are exchanged with the environment. 
Hence, the stoichiometric cycle~\eqref{eq:subcycle} must satisfy 
\begin{equation}
\cycleExP \neq 0\,,
\label{eq:couplingEnvP}
\end{equation} 
when written as
\begin{equation}
\cycle = \big(\cycleIn\,,\cycleExQ\,,\cycleExP\big)^\intercal
\label{eq:subcycleQP}
\end{equation}
by applying the splitting $\setrctEx = \setrctExQ\cup\setrctExP$, with $\setrctExQ$ (resp. $\setrctExP$) 
the set of $\nspeqex$ (resp. $\nspepex$) pairs of reactions exchanging some catalysts (resp. substrates)

Note that for driven catalytic CRNs the (sub)stoichiometric matrix for the exchange reactions in Eq.~\eqref{eq:matSinex} specializes into
\begin{equation}
\renewcommand*{\arraystretch}{1.4}
\matSex = 
	\begin{pmatrix}
		\matSexQ\,,		&\matOexQ\\
		\matOexP\,,		&\matSexP
	\end{pmatrix} \,,
\label{eq:matSinexQP} 
\end{equation}
where $\matSexQ$ (resp. $\matSexP$) is the $\nspeq\times\nspeqex$ (reps. $\nspep\times\nspepex$) matrix
whose columns $\setrctExQ\ni\rcts>0$ (resp. $\setrctExP\ni\rcts>0$) have all null entries 
except for the row corresponding to the $\spe$ catalyst (resp. $\spe$ substrate) that is equal to~$-1$;
$\matOexQ$ (resp. $\matOexP$) is the $\nspeq\times\nspepex$ (resp. $\nspep\times\nspeqex$) null matrix resulting from the fact that no catalysts (resp. substrates) are involved in the exchange reactions $\setrctExP$ of the substrates (resp. $\setrctExQ$ of the catalysts).

Furthermore, Eq.~\eqref{eq:cycleimplication} specializes into 
\begin{equation}
\matSexP\cycleExP\neq0\,.
\label{eq:cycleimplicationP}
\end{equation}
The condition in Eq.~\eqref{eq:couplingEnvP} together with Eq.~\eqref{eq:cycleimplicationP} also implies  $\cycleIn\neq0$
since $\matS\cycle=0$ requires $\matSinP\cycleIn = -\matSexP\cycleExP\neq0$ (because of Eqs.~\eqref{eq:matSinexQP} and~\eqref{eq:cycleimplicationP}).
Therefore,  $\cycleExP\neq0$ (resp. $\cycleIn\neq0$) 
ensures that the net steady-state current $\currss{}(\n)$ of the exchange (resp. internal) reactions $\setrctExP\ni\rct>0$ (resp. $\setrctIn\ni\rct>0$) such that $\cyclee\neq0$ does not vanish.
This physically means that (some) substrates are continuously interconverted by the catalysts in the internal reactions and continuously exchanged with the environment at steady state.

\paragraph*{Example.}  
We split the exchange reactions $\setrctEx$ of the CRN~\eqref{eq:excrn} (introduced in Subs.~\ref{sub:engine}) into
the exchange reactions of the catalyst $\ch{E}$ $\setrctExQ = \{\pm3\}$
and the exchange reactions of the substrates $\{\ch{W}\,,\ch{F}\}$ $\setrctExP = \{\pm4\,,\pm5\}$.
By applying the same splitting to the stoichiometric cycle~\eqref{eq:excycle},
\begin{equation}
\cycle =(\underbrace{1,1}_{=\cycleIn},\underbrace{b}_{=\cycleExQ},\underbrace{1,-1}_{=\cycleExP})^\intercal\,,
\label{eq:excycle2}
\end{equation}
with $\cycleExP\neq 0$ for every $a\geq0$ and $b\geq0$,
which physically means that the substrates $\{\ch{W}\,,\ch{F}\}$ are always exchanged with the environment. 
Hence, the CRN~\eqref{eq:excrn} is a driven catalytic CRN if $a>0$ and $b\geq0$ (see Subs.~\ref{subs:defcat}). 

Furthermore, the vertical and horizontal lines in Eq.~\eqref{eq:exmatSex} split the matrix $\matSex$ into the blocks introduced in Eq.~\eqref{eq:matSinexQP}.
This shows that $\matOexQ$ (resp. $\matOexP$) is the null matrix accounting for the fact that 
the catalysts $\{\ch{I},\ch{E}\}$ (resp. the substrates $\{\ch{W},\ch{F}\}$) are not involved 
in the exchange reactions of the substrates $\setrctExP = \{\pm4\,,\pm5\}$ (resp. of the catalysts $\setrctExQ = \{\pm3\}$).

\paragraph*{Remark.}
In some cases the stoichiometric cycle $\cycle$ in Eq.~\eqref{eq:subcycleQP} can be expressed in terms of the catalytic vector $\srctc$.

When all substrates of allocatalytic CRNs are involved in the exchange reactions, 
the stoichiometric matrix admits the stoichiometric cycle with
\begin{subequations}
\begin{align}
\cycleIn &= \srctc \,,\\
\cycleExQ &= \boldsymbol 0 \,,\\
\cycleExP &= \matSinP \srctc\,,
\end{align}
\label{eq:scycle_allo}%
\end{subequations}
satisfying Eq.~\eqref{eq:couplingEnvP}  because of the stoichiometric condition for allocatalysis given in Eqs.~\eqref{eq:cond_sub} and~\eqref{eq:cond_allo}.
Note that $\cycle$ in Eq.~\eqref{eq:scycle_allo} is a cycle of the stoichiometric matrix 
because $-\matSexP$ becomes, without loss of generality, the $\nspep\times\nspep$ identity matrix. 
Nevertheless, exchanging a subset of the substrates with the environment might be in general sufficient for the emergence of a stoichiometric cycle satisfying Eq.~\eqref{eq:couplingEnvP} in allocatalytic CRNs. 

When all substrates and autocatalysts of autocatalytic CRNs are involved in the exchange reactions, 
the stoichiometric matrix admits the stoichiometric cycle with
\begin{subequations}
\begin{align}
\cycleIn &= \srctc \,,\\
\cycleExQ &= \matSinQ \srctc \,,\\
\cycleExP &= \matSinP \srctc\,,
\end{align}
\label{eq:scycle_auto}%
\end{subequations}
satisfying Eq.~\eqref{eq:couplingEnvP}  because of the stoichiometric condition for autocatalysis given in Eqs.~\eqref{eq:cond_sub} and~\eqref{eq:cond_auto}.
Note that $\cycle$ in Eq.~\eqref{eq:scycle_auto} is a cycle of the stoichiometric matrix
because also $-\matSexQ$ becomes, without loss of generality, the $\nspeq\times\nspeq$ identity matrix. 
Nevertheless, exchanging a subset of the autocatalysts and substrates with the environment might be in general sufficient for the emergence of a stoichiometric cycle satisfying Eq.~\eqref{eq:couplingEnvP} in autocatalytic CRNs.

%%%%%%%%%%%%%%%%%%%%%%%%%%%%%%%%%%%%%%%%%%%%%%%%%%%%%
\subsection{Deficiency in Catalytic CRNs\label{subs:def_cat}} 
We show now that driven catalytic CRNs are not deficiency-zero.

\begin{theorem}
\label{th:deficienycat}
A catalytic CRN (as defined in Subs.~\eqref{subs:defcat}) 
driven out of equilibrium via exchange reactions (as defined in Subs~\eqref{subs:catce}) has deficiency~$\deff>0$.
\end{theorem}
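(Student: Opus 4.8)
The plan is to show that the deficiency $\deff = \dim\ker\matS - \dim\ker\matI$ is strictly positive by exhibiting a stoichiometric cycle that cannot be a topological cycle, i.e.\ cannot lie in $\ker\matI$. The starting observation is that $\matS = \matC\matI$, so $\ker\matI \subseteq \ker\matS$ always; hence it suffices to find $\cycle$ with $\matS\cycle = 0$ but $\matI\cycle \neq 0$. The natural candidate is the driving cycle itself, the cycle $\cycle = (\cycleIn, \cycleExQ, \cycleExP)^\intercal$ guaranteed to exist by the definition of a driven catalytic CRN, with $\cycleExP \neq 0$ (Eq.~\eqref{eq:couplingEnvP}) and consequently $\cycleIn \neq 0$.

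First I would set up the complex structure. The exchange reactions~\eqref{eq:exrct} each connect a single species $\chem$ (viewed as a complex) to the zero complex $\emptyset$. So the graph of complexes for the full CRN contains, as a distinguished node, the complex $\emptyset$, and every exchange reaction in $\setrctExP$ is an edge incident to $\emptyset$. The key structural point is that $\emptyset$ is a complex that appears \emph{only} in exchange reactions: no internal reaction produces or consumes $\emptyset$ (internal reactions interconvert genuine species among catalysts and substrates, and in particular $\matSinP$ acts on substrate species, none of which is the empty complex). Therefore, in the incidence matrix $\matI$, the row corresponding to the complex $\emptyset$ has nonzero entries only in the columns indexed by $\setrctEx$.

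Next I would evaluate $\matI\cycle$ on the row of the complex $\emptyset$. Using the block structure $\matSex$ in Eq.~\eqref{eq:matSinexQP}, each exchange reaction $\rcts \in \setrctExP$ with $\cyclee \neq 0$ contributes to the net flux into/out of $\emptyset$, and the substrate species involved are all distinct (one substrate per exchange channel). Since $\cycleExP \neq 0$, at least one substrate exchange reaction, say $\rcts$, has $\phi_{\rcts} \neq 0$; by the structure of $\matSexP$ (columns with a single $-1$ entry), the species row of that substrate in $\matS\cycle = 0$ forces cancellation with the internal block, which by Eq.~\eqref{eq:cycleimplicationP} reads $\matSinP\cycleIn = -\matSexP\cycleExP \neq 0$ — a genuine nonzero flux through internal reactions. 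The $\emptyset$-row of $\matI\cycle$ then picks up $\sum_{\rcts \in \setrctEx} (\text{incidence of }\emptyset)\,\phi_{\rcts}$, and I would argue this cannot vanish: the contributions come only from exchange reactions, each attaching $\emptyset$ to a distinct species-complex, so there is no way for the nonzero $\phi_{\rcts}$ on the substrate-exchange channels to be cancelled by flux through other edges at the node $\emptyset$ unless the whole of $\cycleEx$ conspires — but even then, I would instead argue more robustly as follows.

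The cleanest route, which I would take as the main line, avoids case analysis at $\emptyset$: suppose for contradiction $\cycle \in \ker\matI$, i.e.\ $\matI\cycle = 0$. A topological cycle is a formal sum of closed walks on the graph of complexes; since $\emptyset$ is incident only to the exchange edges and the only edges touching any substrate-complex $\chem$ for $\spe \in \setspeP$ that leave that node are (i) its exchange edge to $\emptyset$ and (ii) internal reactions, the flux-balance at node $\emptyset$ imposed by $\matI\cycle = 0$ together with the block form~\eqref{eq:matSinexQP} would force $\matSexP\cycleExP$ to be compensated by topological flux — but this in turn would require $\matSinP\cycleIn$ to vanish, contradicting Eq.~\eqref{eq:cycleimplicationP}. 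Hence $\cycle \notin \ker\matI$ while $\cycle \in \ker\matS$, so $\dim\ker\matS > \dim\ker\matI$, giving $\deff > 0$. \textbf{The main obstacle} I anticipate is making the node-balance argument at $\emptyset$ fully rigorous: one must carefully relate $\matI\cycle = 0$ at the $\emptyset$-node to the exchange block and rule out cancellations coming from catalyst-exchange channels $\setrctExQ$, using that catalyst-complexes and substrate-complexes are distinct nodes and that the autonomy of $\matSinQ$ prevents substrate species from reappearing in the catalyst block. I expect this to reduce to a short linear-algebra lemma about the incidence structure near $\emptyset$, but it is the step that requires genuine care rather than bookkeeping.
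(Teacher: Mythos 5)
Your overall strategy coincides with the paper's: take the driving cycle $\cycle=(\cycleIn,\cycleExQ,\cycleExP)^\intercal$ with $\cycleExP\neq0$, use $\mathrm{ker}\,\matI\subseteq\mathrm{ker}\,\matS$ (from $\matS=\matC\matI$), and show $\matI\cycle\neq0$. The genuine gap is in how you try to establish $\matI\cycle\neq0$: you run the balance argument at the node $\emptyset$, and that is the wrong node. The $\emptyset$-row of $\matI\cycle$ is merely the signed sum of \emph{all} exchange fluxes, which can vanish even when $\cycleExP\neq0$; in the paper's example~\eqref{eq:excrn} with $a>0$, $b=0$ the cycle is $(1,1,0,1,-1)^\intercal$ and the $\emptyset$-row of $\matI\cycle$ equals $0+1-1=0$, so no contradiction arises there --- and the cancellation does not even need the catalyst-exchange channels you worry about. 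Your fallback ``cleanest route'' does not repair this: the claim that balance at $\emptyset$ ``would force $\matSexP\cycleExP$ to be compensated by topological flux'' and hence ``require $\matSinP\cycleIn$ to vanish'' is not a deduction, since the $\emptyset$-row constrains only the total exchange flux and says nothing about $\matSinP\cycleIn$. Moreover, you list internal reactions among the edges incident to a substrate complex $\chem$, $\spe\in\setspeP$, which is precisely what the autonomy of $\matSinQ$ rules out --- and that exclusion is the key fact your argument never uses.

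The paper closes exactly this step by reading off the rows of $\matI$ indexed by the exchanged-substrate singleton complexes instead of the $\emptyset$ row. Autonomy of $\matSinQ$ means every internal reaction consumes and produces at least one catalyst, so both of its complexes contain catalysts and can never equal a bare substrate complex $\chem$; catalyst-exchange reactions do not involve these complexes either. Hence those rows of $\matI$ are $(\matOinP\,,\matOexP\,,\matSexP)$, and applying them to $\cycle$ gives $\matSexP\cycleExP\neq0$ by Eq.~\eqref{eq:cycleimplicationP}, so $\matI\cycle\neq0$ and $\deff>0$. In short: right cycle, right inclusion, but the node-balance must be evaluated at the substrate complexes (where autonomy guarantees no other edges attach), not at $\emptyset$ (where balance can genuinely hold).
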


\begin{proof}
By definition, the (sub)stoichiometric matrix $\matSinQ$ of catalytic CRNs is autonomous (Subs.~\eqref{subs:defcat}).
This implies that,
while the exchanged substrates (together with the environment $\emptyset$) are the complexes of their exchange reactions, 
the internal reactions interconvert different complexes since they always involve some catalysts.
Thus, the submatrix of the incidence matrix $\matI$ that collects the rows corresponding to the substrates can be written as 
\begin{equation}
(\matOinP\,,\matIexP )\,,
\label{eq:submatIP}
\end{equation}
where
$\matIexP$ is the incidence matrix of the substrates in the exchange reactions, and
$\matOinP$ is the $\nspep\times\nrctin$ (with $\nrctin$ the number of pairs of internal reactions) null matrix resulting from the fact that substrates are not complexes of the internal reactions. 
Since some exchange reactions might involve catalysts, $\matIexP$ reads
\begin{equation}
\matIexP = (\matOexP\,,\matSexP)\,,
\label{eq:submatIP2}
\end{equation}
with $\matOexP$ and $\matSexP$ already introduced in Eq.~\eqref{eq:matSinexQP}.
Because of the submatrix~\eqref{eq:submatIP}, 
the stoichiometric cycle $\cycle$ 
satisfying Eq.~\eqref{eq:couplingEnvP}, whose existence is granted by definition of driven catalytic CRNs (Subs~\eqref{subs:catce}),
is not a topological cycle: $\matI\cycle\neq0$.
Indeed, by using Eqs.~\eqref{eq:submatIP2},~\eqref{eq:subcycleQP} and~\eqref{eq:cycleimplicationP} 
\begin{equation}
(\matOinP\,,\matIexP )\, \cycle = \matSexP\cycleExP\neq0\,.
\label{eq:proofdeficiency}
\end{equation}
Hence, driven catalytic CRNs admit at least one stoichiometric cycle that is not a topological cycle and $\deff>0$.
\end{proof}

%We show that driven catalytic CRNs (namely, satisfying Eqs.~\eqref{eq:cond} and~\eqref{eq:couplingEnvP}) have deficiency~$\deff>0$
%by proving that a stoichiometric cycle satisfying Eq.~\eqref{eq:couplingEnvP} is not a topological cycle, i.e., $\matI\cycle\neq0$.
%To do so, we use the condition of autonomy for $\matSinQ$ which implies that 
%while the exchanged substrates (together with the environment $\emptyset$) are the complexes of their exchange reactions, 
%the internal reactions interconvert different complexes since they always involve some catalysts.
%Thus, the submatrix of the incidence matrix $\matI$ that collects the rows corresponding to the substrates can be written as 
%\begin{equation}
%(\matOinP\,,\matIexP )\,,
%\label{eq:submatIP}
%\end{equation}
%where
%$\matIexP$ is the incidence matrix of the substrates in the exchange reactions, and
%$\matOinP$ is the $\nspep\times\nrctin$ (with $\nrctin$ the number of pairs of internal reactions) null matrix resulting from the fact that substrates are not complexes of the internal reactions. 
%Since some exchange reactions might involve catalysts, $\matIexP$ reads
%\begin{equation}
%\matIexP = (\matOexP\,,\matSexP)\,,
%\label{eq:submatIP2}
%\end{equation}
%with $\matOexP$ and $\matSexP$ already introduced in Eq.~\eqref{eq:matSinexQP}.
%Because of the submatrix~\eqref{eq:submatIP}, we have $\matI\cycle\neq0$.
%Indeed, by using Eqs.~\eqref{eq:submatIP2},~\eqref{eq:subcycleQP} and~\eqref{eq:cycleimplicationP} 
%\begin{equation}
%(\matOinP\,,\matIexP )\, \cycle = \matSexP\cycleExP\neq0\,.
%\label{eq:proofdeficiency}
%\end{equation}

\paragraph*{Example.}  
As already discussed, the CRN~\eqref{eq:excrn} is driven by the exchange reactions (Sec.~\ref{sec:crn}) and is catalytic when $a>0$ and $b\geq0$ (Subs.~\ref{subs:defcat}). 
In particular, it is allocatalytic when $a>0$ and $b=0$, while it is autocatalytic when $a>0$ and $b>0$.
Notice that the existence of the stoichiometric cycle~\eqref{eq:excycle} is granted also for the autocatalytic case even if only \ch{E} is exchanged with the environment.
By labeling now the complexes as $\comps{1}=\ch{I}$, $\comps{2}=\ch{F + $a$E}$, $\comps{3}=(a + b)\ch{E + W}$, $\comps{4}=\ch{E}$, $\comps{5}=\ch{W}$ and $\comps{6}=\ch{F}$, $\comps{7}=\emptyset$, the incidence matrix reads
\begin{equation}
\matI=
 \kbordermatrix{
    & \color{g}1 &\color{g}2&&\color{g}3&&\color{g}4&\color{g}5\cr
    \color{g}\comps{1}   	& 1 &-1 & \vrule & 0 & \vrule  & 0 & 0 \cr
    \color{g}\comps{2} 	&-1 & 0 & \vrule & 0 & \vrule  & 0 & 0 \cr
    \color{g}\comps{3}		& 0 & 1 & \vrule & 0 & \vrule  & 0 & 0 \cr
    \color{g}\comps{4}		& 0 & 0 & \vrule &-1 & \vrule  & 0 & 0 \cr\cline{2-8}
    \color{g}\comps{5}		& 0 & 0 & \vrule & 0 & \vrule  &-1 & 0 \cr
    \color{g}\comps{6}		& 0 & 0 & \vrule & 0 & \vrule  & 0 &-1 \cr\cline{2-8}
    \color{g}\comps{7}		& 0 & 0 & \vrule & 1 & \vrule  & 1 & 1 \cr
  }\,,\label{eq:exmatI_cat}
\end{equation}
where the block introduced in Eq.~\eqref{eq:submatIP} is framed between the horizontal lines and split into $(\matOinP\,,\matOexP\,,\matSexP )$ by the vertical lines. 
Since $\cycle$ in Eq.~\eqref{eq:excycle} is a right null eigenvector of the stoichiometric matrix~\eqref{eq:exmatS}, but not of the incidence matrix~\eqref{eq:exmatI_cat}, the catalytic version of the CRN~\eqref{eq:excrn} has deficiency $\deff>0$.

%%%%%%%%%%%%%%%%%%%%%%%%%%%%%%%%%%%%%%%%%%%%%%%%%%%%%
%%%%%%%%%%%%%%%%%%%%%%%%%%%%%%%%%%%%%%%%%%%%%%%%%%%%%
%%%%%%%%%%%%%%%%%%%%%%%%%%%%%%%%%%%%%%%%%%%%%%%%%%%%%
\section{Discussion and Conclusions\label{sec:discussion}}
In this work, we showed that the dynamics of driven catalytic CRNs cannot be inverted
by controlling the kinetic constants~\footnote{
The kinetic non-invertibility of catalytic CRNs implies that there are no kinetic constants that can generate the dual process, 
where the net steady-state currents of \textit{all} reactions are inverted.
However, this does not exclude that the net steady-state currents of \textit{some} specific reactions 
(e.g., the reaction producing a desired species) can be inverted in catalytic CRNs by controlling the kinetic constants}
as summarized in Fig.~\ref{fig:mainresults}.
This conclusion is reached by combining our two main results.
\begin{figure}[t]
  \centering
  \includegraphics[width=0.99\columnwidth]{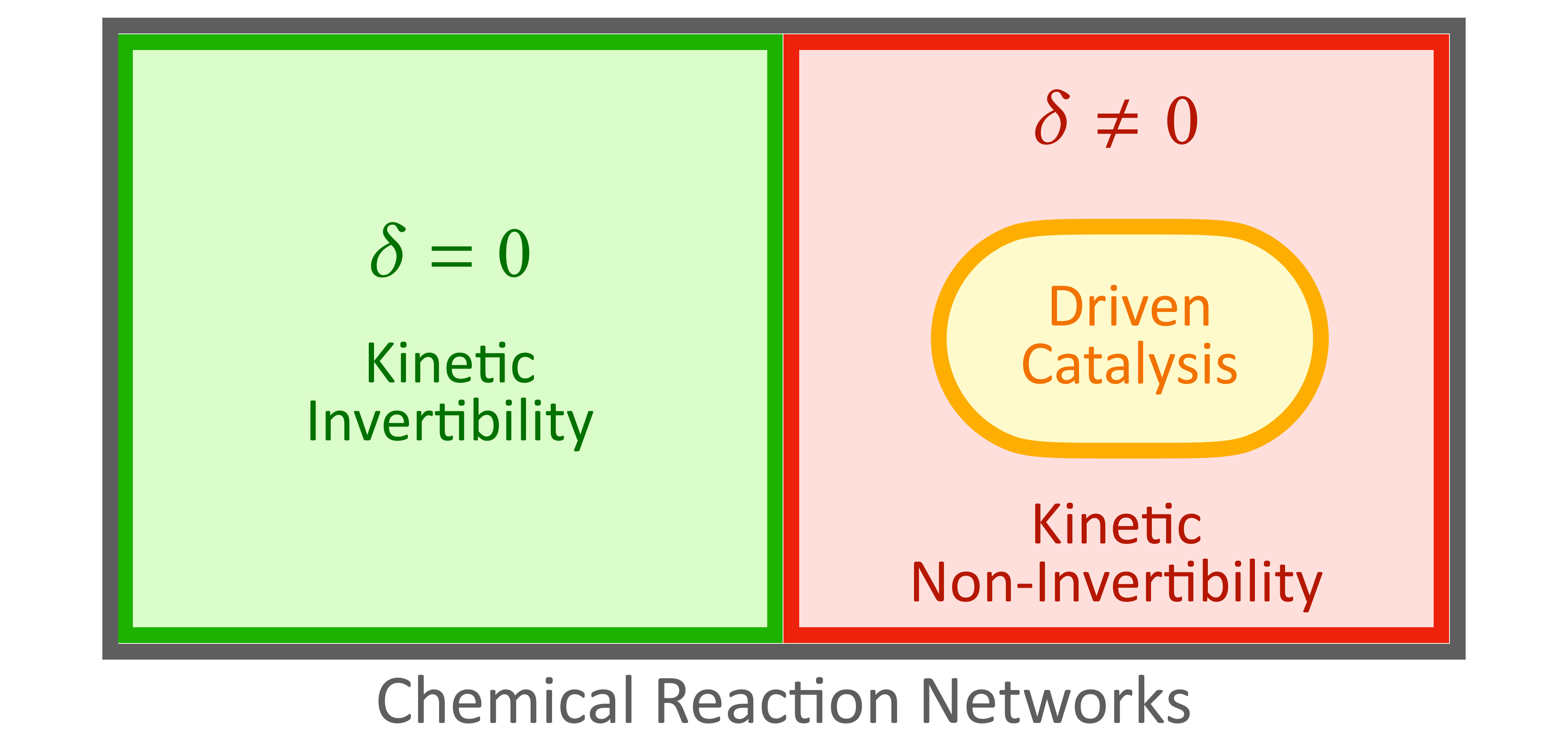}
\caption{Set of all possible CRNs (gray box) can be split into two subsets.
One subset (green box) includes the CRNs with deficiency $\deff = 0$,
whose \dual\ dynamics is still a chemical process satisfying the law of mass-action. 
The dynamics of these CRNs is kinetically invertible.
The other subset (red box) includes the CRNs with deficiency $\deff \neq 0$,
whose \dual\ dynamics is not a chemical process satisfying the law of mass-action. 
The dynamics of these CRNs is kinetically non-invertible.
Driven catalytic CRNs constitute a subset of the kinetically non-invertible CRNs.
}
\label{fig:mainresults}
\end{figure}

Our first  main result (Sec.~\ref{sec:chem_rev}) is that the \dual\ master equation of a reversible stochastic CRN describes a chemical process satisfying mass-action law if and only if the CRN has zero deficiency, i.e., $\deff=0$.
On the one hand, this implies that the dual process of CRNs with $\deff = 0$
can be (in principle) physically realized by setting the value of the kinetic constants according to Eq.~\eqref{eq:rdk}.
Hence, the stochastic dynamics of CRNs with $\deff = 0$ is kinetically invertible.
On the other hand,
for all CRNs with $\deff>0$, the dual process cannot be physically realized:
there are no values of the kinetic constants (or, equivalently, mass-action rates) 
that can invert the dynamics (either in terms of net steady-state currents as discussed in Sec.~\ref{sec:dynamics} or in terms of stochastic trajectories as discussed in App.~\ref{app:trarep}).
This implies that the stochastic dynamics of CRNs with $\deff>0$ is kinetically non-invertible 
as a result of their topology.
However, their deterministic dynamics can always be inverted independently of the deficiency.
Indeed, the net steady-state current of the deterministic dynamics,
\begin{equation}
\dcurre_{\rct}(\sscc)= k_{\rct} \sscc^{\stov{}} - k_{-\rct} \sscc^{\stov{-}}\,,
\end{equation}
can be inverted by choosing the kinetic constants according to Eq.~\eqref{eq:rdk}:
\begin{equation}
\rdcurre_{\rct}(\sscc)= \rd{k}_{\rct} \sscc^{\stov{}} - \rd{k}_{-\rct} \sscc^{\stov{-}} = - \dcurre_{\rct}(\sscc)\,.
\end{equation}
We leave the investigation of the disappearance of the non-invertibility in the thermodynamic limit~\cite{vankampen} to further studies.

Our second main result  (Sec.~\ref{sec:deficiencycat}) is that driven catalytic CRNs have deficiency~$\deff>0$.
We examine now how two assumptions that we used to prove this second result can be relaxed.
First, the proof in Subs.~\ref{subs:def_cat} still holds for generic exchange reactions (i.e., not satisfying Eq.~\eqref{eq:exrct}) that do not involve both substrates and catalysts together.
Indeed, it only requires that the complexes involved in the exchange reactions are different from those involved in the internal reactions.
Second, the proof in Subs.~\ref{subs:def_cat}, and in particular Eq.~\eqref{eq:proofdeficiency}, still holds if there were internal reactions interconverting only the substrates as long as there is at least a cycle~\eqref{eq:subcycleQP} involving the catalytic reactions.

We conclude by noticing that while driven catalytic CRNs have deficiency~$\deff>0$, driven CRNs with~$\deff>0$ might not be catalytic.
Consider for instance the following CRN
\begin{equation}
\begin{split}
\ch{A&<=>[ ${+1}$ ][ ${-1}$ ] C + C' }\\
\ch{C&<=>[ ${+2}$ ][ ${-2}$ ] D}\\
\ch{C'&<=>[ ${+3}$ ][ ${-3}$ ] D'}\\
\ch{D + D'&<=>[ ${+4}$ ][ ${-4}$ ] B}\\
\ch{A &<=>[ ${+5}$ ][ ${-5}$ ] $\emptyset$}\\
\ch{B &<=>[ ${+6}$ ][ ${-6}$ ] $\emptyset$}
\end{split}\,,
\label{eq:excrn3}
\end{equation}
which admits the stoichiometric cycle $\cycle =(1,1,1,1,1,-1)^\intercal$.
The corresponding incidence matrix admits no cycle and thus the CRN~\eqref{eq:excrn3} has deficiency $\deff = 1$.
However, there is no splitting of the species $\{\ch{A}, \ch{B}, \ch{C}, \ch{C'}, \ch{D}, \ch{D'}\}$ into catalysts and substrates such that $\matSinQ$ is autonomous and Eq.~\eqref{eq:cond} is satisfied.

%%%%%%%%%%%%%%%%%%%%%%%%%%%%%%%%%%%%%%%%%%%%%%%%%%%%%
%%%%%%%%%%%%%%%%%%%%%%%%%%%%%%%%%%%%%%%%%%%%%%%%%%%%%
%%%%%%%%%%%%%%%%%%%%%%%%%%%%%%%%%%%%%%%%%%%%%%%%%%%%%
\begin{acknowledgments}
This research was supported by the Luxembourg National Research Fund (FNR) via
the research funding scheme PRIDE (19/14063202/ACTIVE), 
and the CORE projects ThermoComp (C17/MS/11696700) and ChemComplex (C21/MS/16356329).
M.P. thanks Daniele Cappelletti for fruitful discussions.
\end{acknowledgments}

%%%%%%%%%%%%%%%%%%%%%%%%%%%%%%%%%%%%%%%%%%%%%%%%%%%%%
%%%%%%%%%%%%%%%%%%%%%%%%%%%%%%%%%%%%%%%%%%%%%%%%%%%%%
%%%%%%%%%%%%%%%%%%%%%%%%%%%%%%%%%%%%%%%%%%%%%%%%%%%%%
%\section*{Data Availability}
%Data sharing is not applicable to this article as no new data were created or analyzed in this study.

%%%%%%%%%%%%%%%%%%%%%%%%%%%%%%%%%%%%%%%%%%%%%%%%%%%%%
%%%%%%%%%%%%%%%%%%%%%%%%%%%%%%%%%%%%%%%%%%%%%%%%%%%%%
%%%%%%%%%%%%%%%%%%%%%%%%%%%%%%%%%%%%%%%%%%%%%%%%%%%%%
%%%%%%%%%%%%%%%%%%%%%%%%%%%%%%%%%%%%%%%%%%%%%%%%%%%%%

\appendix
\section{Stochastic Trajectories and Path Probability \label{app:trarep}}
A stochastic trajectory $\trj{}$ of duration $t$ of a chemical process is a sequence of populations $\{\na,\n_1,\dots,\nl,\dots,\nf\}$
generated by a set of reactions $\{\rcttrja,\dots,\rcttrj,\dots,\rcttrjf\}$ 
sequentially occurring at times $\{\ttrja,\dots,\ttrj,\dots,\ttrjf\}$ 
starting from $\na$ at time $t_0=0$ and arriving at $\nf$ at time $t_{\nstep}$. 
Its path probability, conditioned on the initial population, is given by
\begin{equation}
\probtrj(\trj{}) = 
\prod_{\indext=0}^{\nstep}e^{-(\ttrjp - \ttrj)\sum_{\rct\in\setrct}\,\rate{}(\nl)}
\prod_{\indext=1}^{\nstep}\ratel{}(\nlm)
\label{eq:probtrj}
\end{equation}
with $t_{\nstep+1} = t$ and $\nfg = \n$.
The first term in Eq.~\eqref{eq:probtrj} accounts for the probability of dwelling in the state $\nl$ during the time interval $[{\ttrj},{\ttrjp})$,
while second term accounts for the probability that reaction $\rcttrj$ occurs at time $\ttrj$ while being in state $\nlm$.
Since reactions are assumed to be reversible, every stochastic trajectory $\trj{}$ has its time-reversed counterpart $\rtrj{}$.
The time-reversed trajectory is 
a stochastic trajectory of the same duration $t$ 
given by the sequence of populations $\{\nf,\dots,\nl, \dots,\n_1,\na\}$
generated by the set of reactions $\{-\rcttrjf,\dots,-\rcttrj,\dots,-\rcttrja\}$ 
sequentially occurring at times $\{t-\ttrjf,\dots,t-\ttrj,\dots,t-\ttrja\}$
starting from $\nf$ at time $t_0=0$ and arriving at $\na$ at time $t-\ttrja$.
Its path probability, conditioned on the final population, is given by
\begin{equation}
\probtrj(\rtrj{}) = 
\prod_{\indext=0}^{\nstep}e^{-(\ttrjp - \ttrj)\sum_{\rct\in\setrct}\,\rate{}(\nl)} 
\prod_{\indext=1}^{\nstep}\ratel{-}(\nl)\,.
\label{eq:rprobtrj}
\end{equation}
In general, the steady-state probability of a trajectory $\probtrj(\trj{})\probss(\na)$ is different from its time-reversed counterpart $\probtrj(\rtrj{})\probss(\nt)$.
Only when the detailed balance condition~\eqref{eq:db} is satisfied, 
$\probtrj(\trj{})\probeq(\na) = \probtrj(\rtrj{})\probeq(\nt)$.
When the detailed balance condition~\eqref{eq:db} is not satisfied, without loss of generality,
$\probtrj(\trj{})\probss(\na) > \probtrj(\rtrj{})\probss(\nt)$ implying that  the population $\n$ evolves preferentially along the forward trajectory $\trj{}$ rather than along its time-reversed counterpart $\rtrj{}$.
For this reason, chemical processes are said to be antisymmetric under time reversal.

For a given a trajectory $\trj{}$ of the original chemical process, the path probability of the time-reversed trajectory $\rtrj{}$ occurring in the \dual\ process $\rprobtrj(\rtrj{})$ satisfies
\begin{equation}
\probtrj(\trj{}) \probss(\na) = \rprobtrj(\rtrj{}) \probss(\nt) \,.
\end{equation}
This can be easily proven using
$\sum_{\rct\in\setrct}\rrate{}(\n) =\sum_{\rct\in\setrct}\rate{}(\n) $, 
$\prod_{\indext=1}^{\nstep}\rratel{-}(\nl) \probss(\nt)= \prod_{\indext=1}^{\nstep}\ratel{}(\nlm)\probss(\na)$, 
and Eq.~\eqref{eq:rprobtrj}.
This physically means that the time evolution of the population $\n$ in the \dual\ process is inverted compared to the original chemical process.
Indeed, if $\probtrj(\trj{})\probss(\na) > \probtrj(\rtrj{})\probss(\nt)$ then $\rprobtrj(\trj{})\probss(\na) < \rprobtrj(\rtrj{})\probss(\nt)$ implying that in the \dual\ process the population $\n$ evolves preferentially along the time-reversed trajectory $\rtrj{}$ rather than along its forward counterpart $\trj{}$.
%{\color{blue}
%This implies that variance of the net reaction currents in the dual process are same of the corresponding chemical process.
%}

%%%%%%%%%%%%%%%%%%%%%%%%%%%%%%%%%%%%%%%%%%%%%%%%%%%%%
%%%%%%%%%%%%%%%%%%%%%%%%%%%%%%%%%%%%%%%%%%%%%%%%%%%%%
%%%%%%%%%%%%%%%%%%%%%%%%%%%%%%%%%%%%%%%%%%%%%%%%%%%%%
\section{Steady-State Probability of Deficiency-Zero CRNs\label{sec:ssd0}}
First, we show that  if a CRN has zero deficiency, 
then the probability in Eq.~\eqref{eq:sspl} satisfies the steady-state condition~\eqref{eq:ss} as proven in Ref.~\cite{Anderson2010}.
We start by recognizing that deficiency-zero CRNs are \textit{complex balanced} for every value of the kinetic constants: 
the fixed point of the deterministic dynamics, i.e., $\matS \dcurr(\sscc)= 0$, must satisfy $\matI\dcurr(\sscc)=0$, which entry-wise reads
\begin{equation}
\sum_{\rct>0} (\dk_{\com,\com(-\rct)} - \dk_{\com,\com(\rct)})\dcurre_{\rct}(\sscc)=0\text{ }\text{ }\text{ }\text{ }\forall\com
\label{eq:detssc}
\end{equation}
since any (right) null eigenvector of the stoichiometric matrix $\matS$ must be a (right) null eigenvector of the incidence matrix $\matI$.
As specified in the main text, $\sscc\in\realp^\nspe$,
$\dcurr(\cc)= (\dots,\dcurre_{\rct}(\cc),\dots)^\intercal_{\rct>0}$,
$ \dcurre_{\rct}(\cc)= k_{\rct} \cc^{\stov{}} - k_{-\rct} \cc^{\stov{-}}$,
and $\vvv^\www = \prod_\iii{\vvve_\iii}^{\wwwe_\iii}$ for every pair of vectors $\vvv$ and $\www$. 
By using $\dcurre_{\rct}(\cc) = -\dcurre_{-\rct}(\cc)$ and defining $\stovcf{} := \stov{}$, Eq.~\eqref{eq:detssc} becomes
$\sscc^{\stovc{}}\sum_{\rct\in\setrct}\dk_{\com,\com(\rct)}(k_{\rct} - k_{-\rct} \sscc^{\colS})=0$,
which implies
\begin{equation}
\sum_{\rct\in\setrct}\dk_{\com,\com(\rct)}\Big(k_{\rct} - k_{-\rct} \sscc^{\colS}\Big)=0\,.
\label{eq:detssc_final}
\end{equation}
By rewriting now the right hand side of the chemical master equation~\eqref{eq:cme} as
\begin{equation}
\sum_{\com}\sum_{\rct\in\setrct}\dk_{\com,\com(\rct)}\{\rate{-}(\np)  \prob(\np) - \rate{}(\n) \prob(\n) \}\,,
\label{eq:ss_complex1}
\end{equation}
plugging the probability given in Eq.~\eqref{eq:sspl} and using $\stovcf{} = \stov{}$, we obtain
\begin{equation}
\sum_{\com}\frac{\sscc^{\n}}{\pnorm(\n-\stovc{})!}\sum_{\rct\in\setrct}\dk_{\com,\com(\rct)}\Big\{  k_{-\rct} \sscc^{\colS} - k_{\rct} \Big\}\,,
\end{equation}
which vanishes because of~\eqref{eq:detssc_final}.
This proves that if a CRN has zero deficiency then the probability in Eq.~\eqref{eq:sspl} satisfies the steady-state condition~\eqref{eq:ss}.

Second, we show that if the probability in Eq.~\eqref{eq:sspl} satisfies the steady-state condition~\eqref{eq:ss}, then the CRN has zero deficiency as proven in Ref.~\cite{Cappelletti2016}.
We start by rewriting the steady-state condition~\eqref{eq:ss} as 
\begin{equation}\small
\sum_{\com}\sum_{\rct\in\setrct}\dk_{\com,\com(\rct)}\Big\{\rate{-}(\np)  \frac{\probss(\np)}{\probss(\n)} - \rate{}(\n) \Big\}=0\,.
\label{eq:ss_complex2}
\end{equation}
By assuming that $\probss(\n)$ is given in Eq.~\eqref{eq:sspl} and using $\stovcf{} = \stov{}$, we obtain 
\begin{equation}
\sum_{\com}\frac{\n!}{(\n-\stovc{})!}\sum_{\rct\in\setrct}\dk_{\com,\com(\rct)}\Big\{  k_{-\rct} \sscc^{\colS} - k_{\rct} \Big\}=0\,.
\label{eq:cappelletti}
\end{equation}
The terms $\{{\n!}/{(\n-\stovc{})!}\}_{\com}$ are linearly-independent polynomials (indeed, the monomial with maximal degree in  ${\n!}/{(\n-\stovc{})!}$ is $\n^{\stovc{}}$ and differ for all complexes $\com$). 
For this reason, Eq.~\eqref{eq:cappelletti} implies that Eq.~\eqref{eq:detssc_final} must be satisfied.
Hence, $\sscc$ is the fixed point of the deterministic dynamics of a complex-balanced CRN.
Complex-balanced CRNs, except for a set of kinetic constants of null measure, have zero deficiency.~\cite{Polettini2015, Cappelletti2016}

\bibliography{biblio}
\end{document}